\tikzset{every picture/.style={>=stealth'}}
\pgfplotsset{compat=1.16}
\newenvironment{customlegend}[1][]{%
    \begingroup
    \pgfplots@init@cleared@structures
    \pgfplotsset{#1}%
}{%
    \pgfplots@createlegend
    \endgroup
}%
\def\addlegendimage{\pgfplots@addlegendimage}
\colorlet{1level}{cyan}
\colorlet{2level}{orange}
\tikzstyle{dots_0nodes}=[only marks, mark=o, mark size=2.5pt,
\tikzstyle{dots_1level}=[only marks, mark=*, mark size=2.5pt,
\tikzstyle{dots_2level}=[only marks, mark=*, mark size=2.5pt,
\tikzstyle{plot_adiar}=[color=black, mark=o, mark size=1pt, line width=0.7pt]
\tikzstyle{plot_buddy}=[color=red, mark=triangle, mark size=1pt, line width=0.7pt]
\tikzstyle{plot_cudd}=[color=blue, mark=diamond, mark size=1pt, line width=0.7pt]
\tikzstyle{plot_sylvan}=[color=purple, mark=square, mark size=1pt, line width=0.7pt]
\tikzstyle{plot_adiar_v120}=[color=black, densely dashed, mark=o, mark size=1pt, line width=0.7pt]
\tikzstyle{plot_0nodes}=[color=black, dashed, line width=0.7pt]
\tikzstyle{plot_1level}=[color=1level, line width=0.7pt]
\tikzstyle{plot_2level}=[color=2level, line width=0.7pt]
\newcommand{\B}[0]{\ensuremath{\mathbb{B}}}
\newcommand{\N}[0]{\ensuremath{\mathbb{N}}}
\newcommand{\Lcal}[0]{\ensuremath{\mathcal{L}}}
\newcommand{\LcalBDD}[0]{\ensuremath{\mathcal{L}_{\mathit{OBDD}}}}
\newcommand{\labelof}[0]{\ensuremath{\mathit{label}}}
\newcommand{\inset}[0]{\ensuremath{\mathit{in}}}
\newcommand{\outset}[0]{\ensuremath{\mathit{out}}}
\newcommand{\abs}[1]{\left\lvert #1 \right\rvert}
\newcommand{\Adiar}[0]{Adiar}
\newcommand{\TPIE}[0]{TPIE}
\newcommand{\arc}[3]{\ensuremath{#1 \xrightarrow{_{#2}} #3}}
\newcommand{\weightedPPR}[0]{\ensuremath{\mathit{PPR}}}
\def\orcidID#1{${}^{\smash{\href{https://orcid.org/#1}{\protect\raisebox{-1.25pt}{\protect\includegraphics{figure/orcid_color-eps-converted-to}}}}}$}
\newtheorem{theorem}{Theorem}
\newtheorem{definition}[theorem]{Definition}
\newtheorem{lemma}[theorem]{Lemma}
\newtheorem{proposition}[theorem]{Proposition}
\newtheorem{corollary}[theorem]{Corollary}
\newtheorem*{observation*}{Observation}
\newcommand*{\mailto}[1]{\href{mailto:#1}{\nolinkurl{#1}}}
\def\arxiv{1}
\begin{document}

\title{Predicting Memory Demands of BDD Operations\\using Maximum Graph Cuts
  \\ {\large (\emph{Extended Version})}}

\author{Steffan Christ Sølvsten
  \\ \\ {\normalsize Aarhus University, Denmark}
  \and Jaco van de Pol
  \\ \\ {\normalsize \{%
  \href{mailto:soelvsten@cs.au.dk}{\color{black} soelvsten},%
  \href{mailto:jaco@cs.au.dk}{\color{black} jaco}%
  \}@cs.au.dk}
  }

\maketitle


\begin{abstract}\noindent
  The BDD package Adiar manipulates Binary Decision Diagrams (BDDs) in external
  memory. This enables handling big BDDs, but the performance suffers when
  dealing with moderate-sized BDDs. This is mostly due to initializing expensive
  external memory data structures, even if their contents can fit entirely
  inside internal memory.

  The contents of these auxiliary data structures always correspond to a graph
  cut in an input or output BDD. Specifically, these cuts respect the levels of
  the BDD. We formalise the shape of these cuts and prove sound upper bounds on
  their maximum size for each BDD operation.

  We have implemented these upper bounds within Adiar. With these bounds, it can
  predict whether a faster internal memory variant of the auxiliary data
  structures can be used. In practice, this improves Adiar's running time across
  the board. Specifically for the moderate-sized BDDs, this results in an
  average reduction of the computation time by $86.1\%$ (median of $89.7\%$). In
  some cases, the difference is even $99.9\%$. When checking equivalence of
  hardware circuits from the EPFL Benchmark Suite, for one of the instances the
  time was decreased by $52$ hours.
\end{abstract}


\section{Introduction} \label{sec:introduction}

A Binary Decision Diagrams (BDD) \cite{Bryant1986} is a data structure that has
found great use within the field of combinatorial logic and verification. Its
ability to concisely represent and manipulate Boolean formulae is the key to
many symbolic model checkers, e.g.\
\cite{Cimatti2000,Gammie2004,Ciardo2009,Kant2015,Lomuscio2017,He2020,Amparore2022}
and recent symbolic synthesis algorithms \cite{Yi2022}. Bryant and Heule
recently found a use for BDDs to create SAT and QBF solvers with certification
capabilities \cite{Bryant2021:SAT,Bryant2021:QBF,Bryant2022} that are better at
proof generation than conventional SAT solvers.

Adiar~\cite{Soelvsten2022:TACAS} is a redesign of the classical BDD algorithms
such that they are optimal in the I/O model of Aggarwal and
Vitter~\cite{Aggarwal1987}, based on ideas from Lars Arge~\cite{Arge1995}. As
shown in Fig.~\ref{fig:motivation}, this enables Adiar to handle BDDs beyond the
limits of main memory with only a minor slowdown in performance, unlike
conventional BDD implementations. Adiar is implemented on top of the \TPIE\
library~\cite{Vengroff1994,Molhave2012}, which provides external memory sorting
algorithms, file access, and priority queues, while making management of I/O
transparent to the programmer. These external memory data structures work by
loading one or more blocks from files on disk into internal memory and
manipulating the elements within these blocks before storing them again on the
disk. Their I/O-efficiency stems from a carefully designed order in which these
blocks are retrieved, manipulated, and stored. Yet, initializing the internal
memory in preparation to do so is itself costly -- especially if purely using
internal memory would have sufficed. This is evident in
Fig.~\ref{fig:motivation} (cf.~Section~\ref{sec:experiments:impact} for more
details) where Adiar's performance is several orders of magnitude worse than
conventional BDD packages for smaller instance sizes. In fact, Adiar's
performance decreases when the amount of internal memory increases.

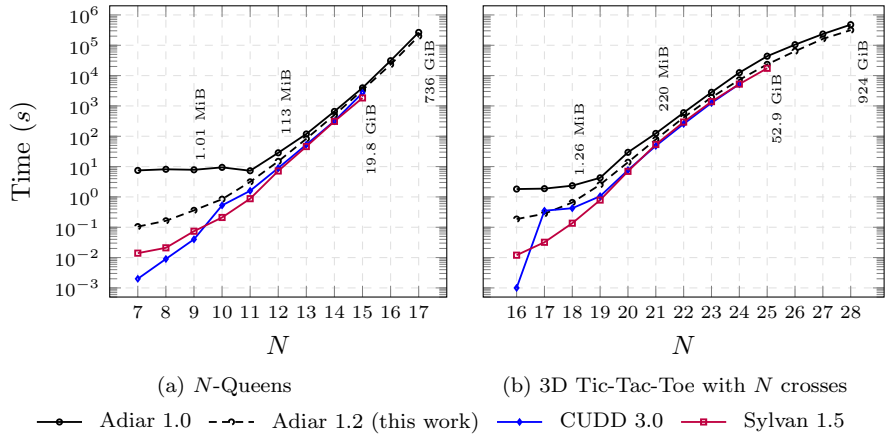
\begin{figure}[t]
  \centering

  \subfloat[$N$-Queens]{
    \begin{tikzpicture}
      \begin{axis}[%
        width=0.5\linewidth, height=0.45\linewidth,
        every tick label/.append style={font=\scriptsize},
        xlabel={$N$},
        xmajorgrids=true,
        xtick={7,8,9,10,11,12,13,14,15,16,17},
        ylabel={Time ($s$)},
        ytick distance={10},
        ymode=log,
        yminorgrids=false,
        ymajorgrids=true,
        ymin=0.0005,
        ymax=2000000,
        grid style={dashed,black!12},
        ]
        \addplot+ [style=plot_adiar]
        table {./data/motivation_queens_adiar_v100.tex};

        \addplot+ [style=plot_adiar_v120]
        table {./data/motivation_queens_adiar_v120.tex};


        \addplot+ [style=plot_cudd]
        table {./data/motivation_queens_cudd.tex};

        \addplot+ [style=plot_sylvan]
        table {./data/motivation_queens_sylvan.tex};
      \end{axis}

      \node[color=black, rotate=90] at (1.2, 2.3) {\tiny $1.01$~MiB};
      \node[color=black, rotate=90] at (2.34, 2.6) {\tiny $113$~MiB};
      \node[color=black, rotate=90] at (3.47, 2.1) {\tiny $19.8$~GiB};
      \node[color=black, rotate=90] at (4.25, 3.0) {\tiny $736$~GiB};
    \end{tikzpicture}
  }
  \subfloat[3D Tic-Tac-Toe with $N$ crosses]{
    \begin{tikzpicture}
      \begin{axis}[%
        width=0.57\linewidth, height=0.45\linewidth,
        every tick label/.append style={font=\scriptsize},
        xlabel={$N$},
        xmajorgrids=true,
        xtick={16,17,18,19,20,21,22,23,24,25,26,27,28},
        ytick distance={10},
        ymode=log,
        yminorgrids=false,
        ymajorgrids=true,
        ymin=0.0005,
        ymax=2000000,
        grid style={dashed,black!12},
        yticklabels={,,}
        ]
        \addplot+ [style=plot_adiar]
        table {./data/motivation_tic_tac_toe_adiar_v100.tex};

        \addplot+ [style=plot_adiar_v120]
        table {./data/motivation_tic_tac_toe_adiar_v120.tex};


        \addplot+ [style=plot_cudd]
        table {./data/motivation_tic_tac_toe_cudd.tex};

        \addplot+ [style=plot_sylvan]
        table {./data/motivation_tic_tac_toe_sylvan.tex};
      \end{axis}

      \node[color=black, rotate=90] at (1.28, 2.1) {\tiny $1.26$~MiB};
      \node[color=black, rotate=90] at (2.4, 2.9) {\tiny $220$~MiB};
      \node[color=black, rotate=90] at (3.9, 2.5) {\tiny $52.9$~GiB};
      \node[color=black, rotate=90] at (5.05, 3.0) {\tiny $924$~GiB};
    \end{tikzpicture}
  }

  \begin{tikzpicture}
    \begin{customlegend}[
        legend columns=-1,
        legend style={draw=none,column sep=1ex},
        legend entries={
          \footnotesize \Adiar~1.0,
          \footnotesize \Adiar~1.2 (this work),
          \footnotesize CUDD~3.0,
          \footnotesize Sylvan~1.5
        }
      ]
      \addlegendimage{style=plot_adiar}
      \addlegendimage{style=plot_adiar_v120}
      \addlegendimage{style=plot_cudd}
      \addlegendimage{style=plot_sylvan}
    \end{customlegend}
  \end{tikzpicture}

  \caption{Running time solving combinatorial BDD benchmarks. Some instances are
    labelled with the size of the largest BDD constructed to solve them.}
  \label{fig:motivation}
\end{figure}

This shortcoming is not desirable for a BDD package: while our research focuses
on enabling large-scale BDD manipulation, end users should not have to consider
whether their BDDs will be large enough to benefit from Adiar. Solving this also
paves the way for Adiar to include complex BDD operations where conventional
implementations recurse on intermediate results, e.g.\ \emph{Multi-variable
  Quantification}, \emph{Relational Product}, and \emph{Variable Reordering}. To
implement the same, Adiar has to run multiple sweeps. Yet, each of these sweeps
suffer when they unecessarily use external memory data structures. Hence, it is
vital to overcome this shortcomming, to ensure that an I/O-efficient
implementations of these complex BDD operations will also be usable in practice.

The linearithmic I/O- and time-complexity of Adiar's algorithms also applies to
the lower levels of the memory hierarchy, i.e.\ between the cache and RAM.
Hence, there is no reason to believe that the bad performance for smaller
instances is inherently due to the algorithms themselves; if they used an
internal memory variant of all auxiliary data structures, then Adiar ought to
perform well for much smaller instances. In fact, this begs the question: while
we have investigated the applicability of these algorithms at a large-scale in
\cite{Soelvsten2022:TACAS}, how can they seamlessly handle both small and large
BDDs efficiently?

We argue that simple solutions are unsatisfactory: A first idea would be to
start running classical, depth-first BDD algorithms until main memory is
exhausted. In that case, the computation is aborted and restarted with external
memory algorithms. But, this strategy doubles the running time. While it would
work well for small instances, the slowdown for large instances would be
unacceptable. Alternatively, both variants could be run in parallel. But, this
would halve the amount of available memory and again slow down large instances.

A second idea would be to start running Adiar's I/O-efficient algorithms with an
implementation of all auxiliary data structures in internal memory. In this
case, if memory is exhausted, the data coudl be copied to disk, and the
computation could be resumed with external memory. This could be implemented
neatly with the \emph{state pattern}: a wrapper switches transparently to the
external memory variant when needed. Yet, moving elements from one sorted data
structure to another requires at least linear time. Even worse, such a wrapper
adds an expensive level of indirection and hinders the compiler in inlining and
optimising, since the actual data structure is unknown at compile-time.

Instead, we propose to use the faster, internal-memory version of Adiar's
algorithms only when it is guaranteed to succeed. This avoids re-computations,
duplicate storage, as well as the costs of indirection. The main research
question is how to predict a sound upper bound on the memory required for a BDD
operation, and what information to store to compute these bounds efficiently.

\subsection{Contributions}

In Section~\ref{sec:cut}, we introduce the notion of an $i$-level cut for
Directed Acyclic Graphs (DAGs). Essentially, the shape of these cuts is
constricted to span at most $i$ levels of the given DAG. Previous results in
\cite{Lampis2011} show that for $i \geq 4$ the problem of computing the maximum
$i$-level cut is NP-complete. We show that for $i \in \{ 1,2 \}$ this problem is
still computable in polynomial time. These polynomial-time algorithms can be
implemented using a linearithmic amount of time and I/Os. But instead, we use
over-approximations of these cuts. As described in Section~\ref{sec:cut:adiar},
their computation can be piggybacked on existing BDD algorithms, which is
considerably cheaper: for $1$-level cuts, this only adds a $1\%$ linear-time
overhead and does not increase the number of I/O operations.

Investigating the structure of BDDs from the perspective of $i$-level cuts for
$i \in \{ 1,2 \}$ in Section~\ref{sec:cut:bdd} and \ref{sec:cut:terminal}, we
obtain sound upper bounds on the maximum $i$-level cuts of a BDD operation's
output, purely based on the maximum $i$-level cut of its inputs. Using these
upper bounds, Adiar can decide in constant time whether to run the next
algorithm with internal or external memory data structures. Here, only one
variant is run, all memory is dedicated to it, and the exact type of the
auxiliary data structures are available to the compiler.

Our experiments in Section~\ref{sec:experiments} show that it is a good strategy
to compute the 1-level cuts, and to use them to infer an upper bound on the
2-level cuts. This strategy is sufficient to address Adiar's performance issues
for the moderate-sized instances while also requiring the least computational
overhead. As Fig.~\ref{fig:motivation} shows, adding these cuts to Adiar with
version 1.2 removes the overhead introduced by initializing \TPIE's external
memory data structures and so greatly improves Adiar's performance. For example,
to verify the correctness of the small and moderate instances of the EPFL
combinational benchmark circuits~\cite{Amaru2015}, the use of $i$-level cuts
decreases the running time from $56.5$ hours down to $4.0$ hours.

\section{Preliminaries} \label{sec:preliminaries}

\subsection{Graph and Cuts}

A directed graph is a tuple $(V, A)$ where $V$ is a finite set of vertices and
$A \subseteq V \times V$ a set of arcs between vertices. The set of incoming
arcs to a vertex $v \in V$ is $\inset(v) = A \cap (V \times \{v\})$, its
outgoing arcs are $\outset(v) = A \cap (\{v\} \times V)$, and $v$ is a
\emph{source} if its indegree $\abs{\inset(v)} = 0$ and a \emph{sink} if its
outdegree $\abs{\outset(v)} = 0$.

A cut of a directed graph $(V,A)$ is a partitioning $(S,T)$ of $V$ such that $S
\cup T = V$ and $S \cap T = \emptyset$. Given a weight function $w : A
\rightarrow \mathbb{R}$ the weighted maximum cut problem is to find a cut
$(S,T)$ such that $\sum_{a \in S \times T \cap A} w(a)$ is maximal, i.e.\ where
the total weight of arcs crossing from some vertex in $S$ to one in $T$ is
maximised. Without decreasing the weight of a cut, one may assume that all
sources in $V$ are part of the partition $S$ and all sinks are part of $T$. The
maximum cut problem is NP-complete for directed graphs \cite{Papadimitriou1991}
and restricting the problem to directed acyclic graphs (DAGs) does not decrease
the problem's complexity \cite{Lampis2011}.

If the weight function $w$ merely counts the number of arcs that cross a cut,
i.e.\ $\forall a \in A : w(a) = 1$, the problem above reduces to the
\emph{unweighted} maximum cut problem where a cut's weight and size are
interchangeable.

\subsection{Binary Decision Diagrams} \label{sec:preliminaries:bdd}

A Binary Decision Diagram (BDD)~\cite{Bryant1986}, as depicted in
Fig.~\ref{fig:bdd_example}, is a DAG $(V,A)$ that represents an $n$-ary Boolean
function. It has a single source vertex $r \in V$, usually referred to as the
\emph{root}, and up to two sinks for the Boolean values $\B = \{ \bot, \top \}$,
usually referred to as \emph{terminals} or \emph{leaves}. Each non-sink vertex
$v \in V \setminus \B$ is referred to as a BDD \emph{node} and is associated
with an input variable $x_i \in \{ x_0, x_1, \dots, x_{n-1} \}$ where
$\labelof(v) = i$. Each arc is associated with a Boolean value, i.e.\ $A
\subseteq V \times \B \times V$ (written as $\arc{v}{b}{v'}$ for a $(v,b,v') \in
A$), such that each BDD node $v$ represents a binary choice on its input
variable. That is, $\outset(v) = \{ \arc{v}{\bot}{v'}, \arc{v}{\top}{v''} \}$,
reflecting $x_i$ being assigned the value $\bot$, resp.\ $\top$. Here, $v'$ is
said to be $v$'s \emph{low} child while $v''$ is its \emph{high} child.

\if\arxiv1
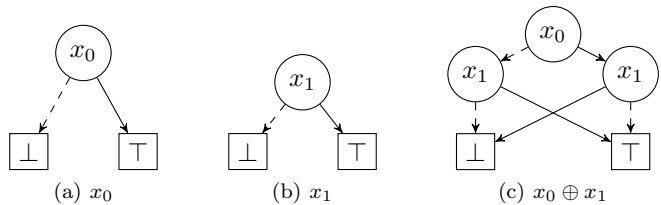
\begin{figure}[t]
\else
\begin{figure}[t]
\fi
  \centering

  \subfloat[$x_0$]{
    \label{fig:bdd_example:x0}

    \begin{tikzpicture}
      \node[shape = circle, draw = black]                                    (0) {$x_0$};

      \node[shape = rectangle, draw = black, below left=.8cm and .2cm of 0]   (sink_F) {$\bot$};
      \node[shape = rectangle, draw = black, below right=.8cm and .2cm of 0]  (sink_T) {$\top$};

      \draw[->]
      (0) edge (sink_T)
      ;

      \draw[->, dashed]
      (0) edge (sink_F)
      ;
    \end{tikzpicture}
  }
  \qquad
  \subfloat[$x_1$]{
    \label{fig:bdd_example:x1}

    \begin{tikzpicture}
      \node[shape = circle, draw = black]                                    (0) {$x_1$};

      \node[shape = rectangle, draw = black, below left=.41cm and .2cm of 0]   (sink_F) {$\bot$};
      \node[shape = rectangle, draw = black, below right=.41cm and .2cm of 0]  (sink_T) {$\top$};

      \draw[->]
      (0) edge (sink_T)
      ;

      \draw[->, dashed]
      (0) edge (sink_F)
      ;
    \end{tikzpicture}
  }
  \qquad
  \subfloat[$x_0 \oplus x_1$]{
    \label{fig:bdd_example:xor}

    \begin{tikzpicture}
      \node[shape = circle, draw = black]                                (0) {$x_0$};
      \node[shape = circle, draw = black, below left= 0cm and .5cm of 0] (1) {$x_1$};
      \node[shape = circle, draw = black, below right=0cm and .5cm of 0] (2) {$x_1$};

      \node[shape = rectangle, draw = black, below=.41cm of 1]  (sink_F) {$\bot$};
      \node[shape = rectangle, draw = black, below=.41cm of 2]  (sink_T) {$\top$};

      \draw[->]
      (0) edge (2)
      (1) edge (sink_T)
      (2) edge (sink_F)
      ;

      \draw[->, dashed]
      (0) edge (1)
      (1) edge (sink_F)
      (2) edge (sink_T)
      ;
    \end{tikzpicture}
  }
  \caption{Examples of Reduced Ordered Binary Decision Diagrams. Terminals are
    drawn as boxes with the Boolean value and BDD nodes as circles with the
    decision variable. \emph{Low} edges are drawn dashed while \emph{high} edges
    are solid.}
  \label{fig:bdd_example}
\end{figure}

An Ordered Binary Decision Diagram (OBDD) restricts the DAG such that all paths
follow some total variable ordering $\pi$: for every arc $\arc{v_1}{}{v_2}$
between two distinct nodes $v_1$ and $v_2$, $\labelof(v_1)$ must precede
$\labelof(v_2)$ according to the order $\pi$. A Reduced Ordered Binary Decision
Diagram (ROBDD) further adds the restriction that for each node $v$ where
$\outset(v) = \{ \arc{v}{\bot}{v'}, \arc{v}{\top}{v''} \}$, (1) $v' \neq v''$
and (2) there exists no other node $u \in V$ such that $\labelof(v) =
\labelof(u)$ and $\outset(u) = \{ \arc{u}{\bot}{v'}, \arc{u}{\top}{v''} \}$. The
first requirement removes \emph{don't care} nodes while the second removes
\emph{duplicates}. Assuming a fixed variable ordering $\pi$, an ROBDD is a
canonical representation of the Boolean function it represents
\cite{Bryant1986}. Without loss of generality, we will assume $\pi$ is the
identity.

This graph-based representation allows one to indirectly manipulate Boolean
formulae by instead manipulating the corresponding DAGs. For simplicity, we will
focus on the Apply operation in this paper, but our results can be generalised
to other operations. Apply computes the ROBDD for $f \odot g$ given ROBDDs for
$f$ and $g$ and a binary operator $\odot~:~\B \times \B \rightarrow \B$. This is
done with a product construction of the two DAGs, starting from the pair $(r_f,
r_g)$ of the roots of $f$ and $g$. If terminals $b_f$ from $f$ and $b_g$ from
$g$ are paired then the resulting terminal is $b_f \odot b_g$. Otherwise, when
nodes $v_f$ from $f$ and $v_g$ from $g$ are paired, a new BDD node is created
with label $\ell = \min(\labelof(v_f), \labelof(v_g))$, and its low and high
child are computed recursively from pairs $(v_f', v_g')$. For the low child,
$v_f'$ is $v_f.\mathit{low}$ if $\labelof(v_f) = \ell$ and $v_f$ otherwise;
$v_g'$ is defined symmetrically. The recursive tuple for the high child is
defined similarly.

\subsubsection{Zero-suppressed Decision Diagrams} \label{sec:preliminaries:zdd}

A Zero-suppressed Decision Diagram (ZDD)~\cite{Minato1993} is a variation of
BDDs where the first reduction rule is changed: a node $v$ for the variable
$\labelof(v)$ with $\outset(v) = \{ \arc{v}{\bot}{v'}, \arc{v}{\top}{v''} \}$ is
not suppressed if $v$ is a \emph{don't care} node, i.e.\ if $v' = v''$, but
rather if it assigns the variable $\labelof(v)$ to $\bot$, i.e.\ if $v'' =
\bot$. This makes ZDDs a better choice in practice than BDDs to represent
functions $f$ where its on-set, $\{ \vec{x} \mid f(\vec{x}) = \top \}$, is
sparse.

The basic notions behind the BDD algorithms persist when translated to ZDDs, but
it is important for correctness that the ZDD operations account for the shape of
the suppressed nodes. For example, the \emph{union} operation needs to replace
recursion requests for $(v_f,v_g)$ with $(v_f,\bot)$ if $\labelof(v_f) <
\labelof(v_g)$ and with $(\bot,v_g)$ if $\labelof(v_f) > \labelof(v_g)$.

\subsubsection{Levelised Algorithms in Adiar} \label{sec:preliminaries:adiar}

BDDs and ZDDs are usually manipulated with recursive algorithms that use two
hash tables: one for memoisation and another to enforce the two reduction rules
\cite{Brace1990,Minato1990}. Lars Arge noted in \cite{Arge1995,Arge1996} that
this approach is not efficient in the I/O-model of Aggarwal and
Vitter~\cite{Aggarwal1987}. He proposed to address this issue by processing all
BDDs iteratively level by level with the time-forward processing technique
\cite{Chiang1995,Meyer2003}: recursive calls are not executed at the time of
issuing the request but are instead deferred with one or more priority queues
until the necessary elements are encountered in the inputs. In
\cite{Soelvsten2022:TACAS}, we implemented this approach in the BDD package
Adiar. Furthermore, with version 1.1 we have extended this approach to ZDDs
\cite{Soelvsten2023:NFM}.

In Adiar, each decision diagram is represented as a sequence of its BDD nodes.
Each BDD node is uniquely identifiable by the pair $(\ell, i)$ of its level
$\ell$, i.e.\ its variable label, and its level-index $i$. And so, each BDD node
can be represented as a triple of its own and its two children's unique
identifiers (uids). The entire sequence of BDD nodes follows a level by level
ordering of nodes which is equivalent to a lexicographical sorting on their uid.
For example, the three BDDs in Fig.~\ref{fig:bdd_example} are stored on disk as
the lists in Fig.~\ref{fig:bdd_topological}.

\if\arxiv1
\begin{figure}[ht!]
\else
\begin{figure}[t]
\fi
  \centering

  \begin{tabular}{r c l c l c l c}
    \ref{fig:bdd_example:x0}:
    & [ & $((0, 0), \bot, \top)$ & ]
    \\
    \ref{fig:bdd_example:x1}:
    & [ & $((1, 0), \bot, \top)$ & ]
    \\
    \ref{fig:bdd_example:xor}:
    & [ & $((0, 0), (1,0), (1,1))$ & , & $((1, 0), \bot, \top)$ & , & $((1, 1), \top, \bot)$ & ]
  \end{tabular}
    
  \caption{In-order representation of BDDs of Fig.~\ref{fig:bdd_example}}
  \label{fig:bdd_topological}
\end{figure}

The conventional recursive algorithms traverse the input (and the output) with
random-access as dictated by the call stack. Adiar replaces this stack with a
priority queue that is sorted such that it is synchronised with a sequential
traversal through the input(s). Specifically, the recursion requests
$\arc{s}{}{t}$ from a BDD node $s$ to $t$ is sorted on the target $t$ -- this
way the requests for $t$ are at the top of the priority queue when $t$ is
reached in the input. For example, after processing the root $(0,0)$ of the BDD
in Fig.~\ref{fig:bdd_example:xor}, the priority queue includes the arcs
$\arc{(0,0)}{\bot}{(1,0)}$ and $\arc{(0,0)}{\top}{(1,1)}$, in that order.
Notice, this is exactly in the same order as the sequence of nodes in
Fig.~\ref{fig:bdd_topological}. Essentially, this priority queue maintains the
yet unresolved parts of the recursion tree $(V',A')$ throughout a level by level
top-down sweep. Yet, since the ordering of the priority queue groups together
requests for the same $t$, the graph $(V',A')$ is not a tree but a DAG.

For BDD algorithms that produce an output BDD, e.g.\ the Apply algorithm, Adiar
first constructs $(V',A')$ level by level. When the output BDD node $t \in V'$
is created from nodes $v_f \in V_f$ and $v_g \in V_g$, the top of the priority
queues provides all ingoing arcs, which are placed in the output. Outgoing arcs
to a terminal, $\outset(t) \cap (V' \times \B \times \B)$, are also immediately
placed in a separate output. On the other hand, recursion requests from $t$ to
its yet unresolved non-terminal children, $\outset(t) \setminus (V' \times \B
\times \B)$, have to be processed later. To do so, these unresolved arcs are put
back into the priority queue as arcs
\begin{equation*}
  (\arc{t}{b}{(v_{f}',v_{g}')}) \in V' \times \B \times (V_f \times V_g)
  \enspace ,
\end{equation*}
where the arc's target is the tuple of input nodes $v_f' \in V_f$ and $v_g' \in
V_g$. This essentially makes the priority queue contain all the yet unresolved
arcs of the output. For example, when using Apply to produce
Fig.~\ref{fig:bdd_example:xor} from Fig.~\ref{fig:bdd_example:x0} and
\ref{fig:bdd_example:x1}, the root node of the output is resolved to have uid
$(0,0)$ and the priority queue contains arcs $\arc{(0,0)}{\bot}{(\bot, (1,0))}$
and $\arc{(0,0)}{\top}{(\top, (1,0))}$. Both of these arcs are then later
resolved, creating the nodes $(1,0)$ and $(1,1)$, respectively.

\if\arxiv
\begin{figure}[ht!]
\else
\begin{figure}[t]
\fi
  \centering

  \begin{tikzpicture}[every text node part/.style={align=center}]
    \draw (0,0) rectangle ++(2,1)
    node[pos=.5]{\texttt{Apply}};
    \draw (5,0) rectangle ++(2,1)
    node[pos=.5]{\texttt{Reduce}};

    \draw[->] (-0.5,0.9) -- ++(0.5,0)
    node[pos=-1.3]{$f$ \texttt{nodes}};
    \draw[->] (-0.5,0.5) -- ++(0.5,0)
    node[pos=-0.5]{$\odot$};
    \draw[->] (-0.5,0.1) -- ++(0.5,0)
    node[pos=-1.3]{$g$ \texttt{nodes}};

    \draw[->] (2,0.8) -- ++(3,0)
    node[pos=0.5,above]{\small internal \texttt{arcs}};

    \node at (3.5,0.5) {\textcolor{gray}{$f \odot g$ \texttt{arcs}}};

    \draw[->] (2,0.2) -- ++(3,0)
    node[pos=0.5,below]{\small terminal \texttt{arcs}};

    \draw[->] (7,0.5) -- ++(0.5,0)
    node[pos=2.9]{$f \odot g$ \texttt{nodes}};
  \end{tikzpicture}

  \caption{The Apply--Reduce pipeline in \Adiar}
  \label{fig:tandem}
\end{figure}
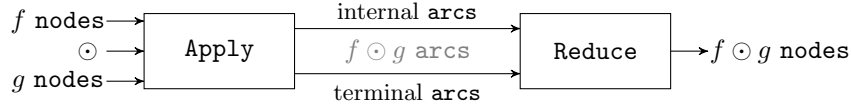

Yet, these \emph{top-down sweeps} of Adiar produce sequences of arcs rather than
nodes. Furthermore, the DAG $(V',A')$ is not necessarily a reduced OBDD. Hence,
as shown in Fig.~\ref{fig:tandem}, Adiar follows up on the above top-down sweep
with a \emph{bottom-up sweep} that I/O-efficiently recreates Bryant's original
Reduce algorithm in \cite{Bryant1986}. Here, a priority queue forwards the uid
of $t'$ that is the result from applying the reduction rules to a BDD node $t$
in $(V',A')$ to the to-be reduced parents $s$ of $t$. These parents are
immediately available by a sequential reading of $(V',A')$ since $\inset(t)$ was
output together within the prior top-down sweep. Both reduction rules are
applied by accumulating all nodes at level $j$ from the arcs in the priority
queue, filtering out \emph{don't care} nodes, sorting the remaining nodes such
that duplicates come in succession and can be eliminated efficiently, and
finally passing the necessary information to their parents via the priority
queue.

\section{Levelised Cuts of a Directed Acyclic Graph} \label{sec:cut}

Any DAG can be divided in one or more ways into several \emph{levels}, where all
vertices at a given level only have outgoing arcs to vertices at later levels.
\begin{definition} \label{def:levelization}
  Given a DAG $(V,A)$ a levelisation of vertices in $V$ is a function $\Lcal : V
  \rightarrow \N \cup \{\infty\}$ such that for any two vertices $v, v' \in V$,
  if there exists an arc $v \rightarrow v'$ in $A$ then $\Lcal(v) < \Lcal(v')$.
\end{definition}
Intuitively, $\Lcal$ is a labeling of vertices $v \in V$ that respects a
topological ordering of $V$. Since $(V,A)$ is a DAG, such a topological ordering
always exists and hence such an $\Lcal$ must also always exist. Specifically,
Specifically, let $\pi_V$ in be the longest path in $(V,A)$ (which must be from
some source $s \in V$ to a sink $t \in V$) and $\pi_v$ be the longest path any
given $v \in V$ to any sink $t \in V$, then $\Lcal(v)$ can be defined to be the
difference of their lengths, i.e.\ $\abs{\pi_V} - \abs{\pi_v}$.

Given a DAG and a levelisation $\Lcal$, we can restrict the freedom of a cut to
be constricted within a small window with respect to $\Lcal$.
Fig.~\ref{fig:i-level_cut_definition} provides a visual depiction of the
following definition.

\begin{definition} \label{def:i-level cut}
  An \emph{$i$-level cut} for $i \geq 1$ is a cut $(S,T)$ of a DAG $(V,A)$ with
  levelisation $\Lcal$ for which there exists a $j \in \N$ such that
  $\Lcal(s) < j + i$ for all $s \in S$ and $\Lcal(t) > j$ for all $t \in T$.
\end{definition}

\begin{figure}[ht!]
  \centering

  \begin{tikzpicture}
    \input{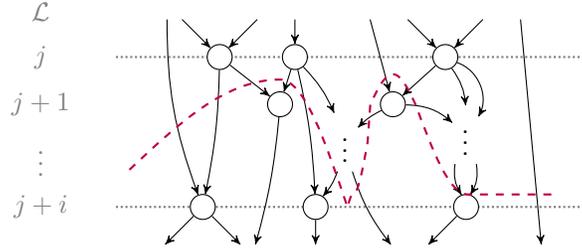}
  \end{tikzpicture}

  \caption{Visualisation of an $i$-level cut.}
  \label{fig:i-level_cut_definition}
\end{figure}

As will become apparent later, deriving the $i$-level cut with maximum weight
for $i \in \{ 1, 2 \}$ will be of special interest.
Fig.~\ref{fig:i-level_cut_examples} shows two $1$-level cuts and three $2$-level
cuts in the BDD for the exclusive-or of the two variables $x_0$ and $x_1$. A
$1$-level cut is by definition a cut between two adjacent levels whereas a
$2$-level cut allows nodes on level $j+1$ to be either in $S$ or in $T$. In
Fig.~\ref{fig:i-level_cut_examples}, both the maximum $1$-level and $2$-level
cuts have size $4$.

\begin{figure}[ht!]
  \centering

  \begin{tikzpicture}
    \input{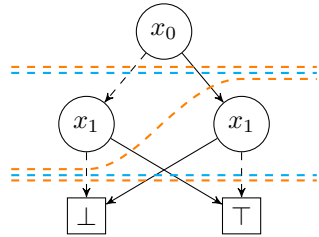}
  \end{tikzpicture}

  \caption{$1$-level (cyan) and $2$-level (orange) cuts in the $x_0 \oplus x_1$
    BDD.}
  \label{fig:i-level_cut_examples}
\end{figure}

\begin{proposition} \label{prop:max 1-level complexity}
  The maximum $1$-level cut in a DAG $(V,A)$ with levelisation $\Lcal$ is
  computable in polynomial time.
\end{proposition}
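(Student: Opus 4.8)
The plan is to exploit the fact that, for $i = 1$, the constraints of Definition~\ref{def:i-level cut} leave \emph{no} freedom in the cut once the threshold $j$ is fixed. Indeed, for $i=1$ the requirement $\Lcal(s) < j+1$ is equivalent to $\Lcal(s) \le j$, and $\Lcal(t) > j$ is equivalent to $\Lcal(t) \ge j+1$. Since every vertex $v \in V$ satisfies exactly one of $\Lcal(v) \le j$ or $\Lcal(v) \ge j+1$ (taking $\infty \ge j+1$), and since $(S,T)$ is a partition of $V$, the cut is completely determined: $S_j = \{\, v \in V \mid \Lcal(v) \le j \,\}$ and $T_j = \{\, v \in V \mid \Lcal(v) > j \,\}$. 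Hence, rather than searching over the exponentially many cuts of $(V,A)$, there is in fact only one $1$-level cut per value of $j$, and the cut changes only as $j$ passes a value attained by $\Lcal$. As $\Lcal$ takes at most $\abs{V}$ distinct finite values, there are at most $\abs{V}+1$ distinct $1$-level cuts to consider. The maximum is therefore obtained by enumerating these candidates, which is already polynomial.

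Next I would compute the weight of each candidate cut. Because $\Lcal$ strictly increases along every arc (Definition~\ref{def:levelization}), no arc can run from $T_j$ back into $S_j$, so the arcs crossing from $S_j$ to $T_j$ are precisely those $a = (v \rightarrow v')$ with $\Lcal(v) \le j < \Lcal(v')$. Thus the weight of the $j$-th cut is
\[
  w(S_j, T_j) \;=\; \sum_{\substack{a = (v \rightarrow v') \in A \\ \Lcal(v) \le j < \Lcal(v')}} w(a) .
\]
Each arc $a = (v \rightarrow v')$ contributes to exactly the thresholds $j$ in the interval $[\Lcal(v),\ \Lcal(v') - 1]$. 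I would therefore use a single sweep over $A$ with a difference array: for each arc add $w(a)$ at index $\Lcal(v)$ and subtract it at index $\Lcal(v')$; a prefix sum over this array then yields $w(S_j,T_j)$ for every relevant $j$ simultaneously in time linear in $\abs{V} + \abs{A}$, after which the maximum is read off. Vertices with $\Lcal(v) = \infty$ cause no difficulty: their incoming arcs simply never have their subtraction applied at a finite threshold, so one caps the array at one past the largest finite level.

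For completeness I would note that $\Lcal$ itself is available in polynomial time: either it is given, or it can be constructed as described after Definition~\ref{def:levelization} via longest-path lengths, which is computable on a DAG by a topological sweep. Combining the $O(\abs{V}+\abs{A})$-time levelisation, the linear-time sweep, and the linear-time maximisation gives an overall polynomial (in fact linearithmic, once sorting for a topological order is included) algorithm, establishing the claim. I expect the only genuinely substantive step to be the opening observation that the feasible region collapses from exponentially many cuts to the $O(\abs{V})$ threshold-indexed cuts $(S_j,T_j)$; the correctness of the weight computation then hinges on the (immediate) fact that the monotonicity of $\Lcal$ forbids back-arcs, so that the counted crossings are exactly the $S_j \to T_j$ arcs and nothing is double-counted or missed.
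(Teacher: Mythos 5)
Your proof rests on exactly the same key observation as the paper's: once the threshold $j$ is fixed, the $1$-level cut is uniquely determined as $S_j = \{v \mid \Lcal(v) \leq j\}$ and $T_j = \{v \mid \Lcal(v) > j\}$, so the maximum is found by enumerating the at most $\abs{V}$ relevant thresholds. The only difference is in the weight computation: the paper re-sums the crossing arcs separately for each $j$ (giving $O(\abs{V} \cdot \abs{A})$ time), while your difference-array sweep evaluates all thresholds in one linear pass -- a sharper bound (consistent with the linearithmic implementation the paper alludes to), but the same underlying argument.
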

\begin{proof}
  For a specific $j \in \Lcal(V)$ we can compute the size of the $1$-level cut
  at $j$ in $O(A)$ time by computing the sum of $w((s,t))$ over all arcs $(s, t)
  \in A$ where $\Lcal(s) \leq j$ and $\Lcal(t) > j$. This cut is by definition
  unique for $j$ and hence maximal. Repeating this for each $j \in \Lcal(V)$ we
  obtain the maximum $1$-level cut of the entire DAG in $O(\abs{\Lcal(V)} \cdot
  \abs{A}) = O(\abs{V} \cdot \abs{A})$ time.
\end{proof}

\begin{proposition} \label{prop:max 2-level complexity}
  The maximum $2$-level cut in a DAG $(V,A)$ with levelisation $\Lcal$ is
  computable in polynomial time.
\end{proposition}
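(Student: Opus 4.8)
The plan is to mirror the approach of Proposition~\ref{prop:max 1-level complexity}: for each candidate position $j$ I compute the \emph{best} $2$-level cut that uses that particular $j$ in Definition~\ref{def:i-level cut}, and then take the maximum over all $j$. Fixing $j$, the constraints $\Lcal(s) < j+2$ for $s \in S$ and $\Lcal(t) > j$ for $t \in T$ force every vertex $v$ with $\Lcal(v) \leq j$ into $S$ and every vertex with $\Lcal(v) \geq j+2$ into $T$. The only remaining freedom is how to assign the vertices at level exactly $j+1$, and unlike the $1$-level case this freedom is nontrivial; the heart of the proof is to show it can nonetheless be resolved optimally in polynomial time.

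The key step I would take is to exploit the levelisation to argue that this freedom decomposes into \emph{independent}, local decisions. Since $\Lcal$ strictly increases along every arc, no arc has both endpoints at level $j+1$; moreover every arc in $\inset(v)$ of a level-$(j+1)$ vertex $v$ originates at a level $\leq j$ (a forced-$S$ vertex), and every arc in $\outset(v)$ lands at a level $\geq j+2$ (a forced-$T$ vertex). Hence placing $v$ in $S$ contributes exactly $\sum_{a \in \outset(v)} w(a)$ to the cut (each such arc then crosses into $T$), whereas placing $v$ in $T$ contributes exactly $\sum_{a \in \inset(v)} w(a)$. Because the arc sets incident to distinct level-$(j+1)$ vertices are disjoint, these choices never interact, so the optimum is obtained by the greedy per-vertex rule of taking the larger of the two sums for each $v$.

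Adding the weight of the arcs that skip past level $j+1$ entirely — those $(s,t) \in A$ with $\Lcal(s) \leq j$ and $\Lcal(t) \geq j+2$, which cross under every admissible assignment — then yields the closed form
\[
  \max_{(S,T)} \sum_{a \in S \times T \cap A} w(a)
  \;=\;
  \sum_{\substack{(s,t) \in A \\ \Lcal(s) \leq j,\ \Lcal(t) \geq j+2}} w((s,t))
  \;+\;
  \sum_{\substack{v \in V \\ \Lcal(v) = j+1}} \max\!\Bigl( \sum_{a \in \inset(v)} w(a),\ \sum_{a \in \outset(v)} w(a) \Bigr)
\]
for the best $2$-level cut at $j$. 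This is evaluable in $O(\abs{A})$ time. Iterating over the $O(\abs{V})$ relevant positions $j$ (namely $j+1$ ranging over the levels in $\Lcal(V)$, together with the degenerate positions that reduce to a $1$-level cut) and taking the maximum gives $O(\abs{V} \cdot \abs{A})$ overall.

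I expect the main obstacle to be the independence argument in the second step: one must verify carefully, purely from Definition~\ref{def:levelization}, that the set of arcs whose crossing status depends on a given level-$(j+1)$ vertex $v$ is exactly $\inset(v) \cup \outset(v)$, and that these sets are pairwise disjoint across the free vertices. This is precisely what averts the combinatorial blow-up that renders the problem NP-complete for $i \geq 4$~\cite{Lampis2011}: a larger window would admit arcs \emph{between} free vertices, coupling their assignments into a genuine max-cut instance, whereas for $i = 2$ there are no such arcs, so the maximisation separates into independent binary choices.
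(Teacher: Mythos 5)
Your proof takes essentially the same route as the paper's: fix the window position, observe that only vertices on the single free level can go either way, resolve each one by an independent greedy comparison of its incoming versus outgoing weight, and iterate over all $O(\abs{V})$ positions for $O(\abs{V} \cdot \abs{A})$ total time. Your write-up is in fact slightly more explicit --- it spells out the independence of the per-vertex choices and states the greedy comparison in the correct direction, whereas the paper's stated rule (place $v$ in $S$ if its outgoing weight is \emph{less} than its incoming weight) has the inequality reversed, an apparent typo --- but it is the same argument.
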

\begin{proof}
  Given a level $j \in \Lcal(V)$, any $2$-level cut for $j-1$ has all vertices
  $v \in V$ with $\Lcal(v) \neq j$ fixed to be in $S$ or in $T$. That is, only
  vertices $v$ where $\Lcal(v) = j$ may be part of either $S$ or of $T$. A
  vertex $v$ at level $j$ can greedily be placed in $S$ if $\sum_{a \in
    \outset(v)} w(a) < \sum_{a \in \inset(v)} w(a) $ and in $T$ otherwise. This
  greedy decision procedure runs in $O(\abs{A})$ time for each level, resulting
  in an $O(\abs{\Lcal(V)} \cdot \abs{A}) = O(\abs{V} \cdot \abs{A})$ total
  running time.
\end{proof}

Lampis, Kaouri, and Mitsou~\cite{Lampis2011} prove NP-completeness for computing
the maximum cut of a DAG by a reduction from the \emph{not-all-equal SAT
  problem} (\textsc{nae3sat}) to a DAG with $5$ levels. That is, they prove
NP-completeness for computing the size of the maximum $i$-level cut for $i \geq
4$. This still leaves the complexity of the maximum $i$-level cut for $i = 3$ as
an open problem.

\subsection{Maximum Levelised Cuts in BDD Manipulation} \label{sec:cut:bdd}

For an OBDD, represented by the DAG $(V,A)$, we will consider the levelisation
function $\LcalBDD$ where all nodes with the same label are on the same level.
\begin{equation*}
  \LcalBDD(v) \triangleq
  \begin{cases}
    \labelof(v) & \text{if } v \not\in \B
    \\
    \infty      & \text{if } v \in \B
  \end{cases}
\end{equation*}

For a BDD $f$ with the DAG $(V,A)$, let $N_f \triangleq \abs{V \setminus \B}$ be
the number of internal nodes in $V$. Let $C_{i:f}$ denote the size of the
unweighted maximum $i$-level cut in $(V,A)$; in Section~\ref{sec:cut:terminal}
we will consider weighted maximum cuts, where one or more terminals are ignored.
Finally, we introduce the arc $\arc{(- \infty)}{}{r_f}$ to the root. This
simplifies the results that follow since $\inset(v) \neq \emptyset$ for all $v
\in V$.

\begin{lemma}
  The maximum cut of a multi-rooted decision diagram $(V,A)$ is less than or
  equals to $N + r$ where $N = \abs{V \setminus \B}$ is the number of internal
  nodes and $r \geq 1$ is the number of roots.
\end{lemma}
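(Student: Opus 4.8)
The plan is to bound the cut by a degree count that exploits that every internal node has out-degree exactly $2$. First I would invoke the reduction recalled in the preliminaries and assume, without loss of generality, that all $r$ roots (the sources) lie in $S$ and both terminals lie in $T$. The weight of the cut is then exactly the number of arcs $\arc{s}{}{t} \in A$ with $s \in S$ and $t \in T$. Since no terminal is in $S$, every node of $S$ is internal, so there are exactly $2\abs{S}$ arcs leaving $S$, each of which either stays inside $S$ or crosses into $T$.

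The crux is to lower-bound the number of arcs that stay inside $S$, and this is where the shape of the cut must enter. The argument needs that the cut respects the levelisation, i.e.\ that there is no arc $\arc{t}{}{s}$ with $t \in T$ and $s \in S$; this holds for every $1$- and $2$-level cut, since such an arc would have to run backwards against $\Lcal$. Granting this, $S$ is closed under taking parents: every incoming arc of a node $s \in S$ already has its tail in $S$. Consequently, counting the arcs internal to $S$ by their heads, each of the $\abs{S} - r$ non-root nodes $s$ of $S$ satisfies $\abs{\inset(s)} \ge 1$ and hence is the head of at least one arc whose tail also lies in $S$; since distinct heads give distinct arcs, there are at least $\abs{S} - r$ arcs internal to $S$.

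Combining the two counts, the cut has size $2\abs{S} - (\text{arcs inside } S) \le 2\abs{S} - (\abs{S} - r) = \abs{S} + r \le N + r$, using $\abs{S} \le N$. I would finish by observing that the bound is tight precisely when $S$ already contains all internal nodes — e.g.\ the XOR diagram of Fig.~\ref{fig:i-level_cut_examples}, where the cut reaches $N + r = 4$.

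The main obstacle is exactly the step that no arc runs from $T$ back into $S$: without it the degree count collapses, since a non-root node of $S$ could have all of its parents in $T$ and contribute nothing to the internal arcs. Indeed, for a fully unrestricted cut the claim is false — pushing an entire middle level into $T$ while keeping its parents and children in $S$ forces a cut exceeding $N + r$ (approaching $2\abs{S}$). So the real work is to state and use that the cuts under consideration respect $\Lcal$; once the resulting ancestor-closedness of $S$ is secured, the remaining counting is routine.
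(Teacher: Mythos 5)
Your argument is correct, and it takes a genuinely different route from the paper. The paper proves the lemma by induction on $N$: it deletes a root $v$, applies the induction hypothesis to the remaining diagram (whose number of roots grows by one for each child of $v$ that becomes a new source), and then argues that re-inserting $v$ can add at most one arc to the maximum cut; the load-bearing step is the claim that ingoing arcs of a node can only contribute to a maximum cut if its in-degree exceeds $2$. Your proof is instead a direct double count: since every node of $S$ is internal, the cut weight is $2\abs{S}$ minus the number of arcs with both endpoints in $S$, and ancestor-closedness of $S$ forces at least $\abs{S} - r$ such internal arcs, giving $\abs{S} + r \leq N + r$. (A small remark: the WLOG that all roots lie in $S$ is not even needed, since the lower bound on internal arcs only uses that at most $r$ nodes of $S$ are sources.)

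More importantly, the ``obstacle'' you flag at the end is not hypothetical: the statement for unrestricted directed cuts is false, and the paper's inductive step fails at exactly the point you identify. Concretely, take a depth-two complete binary tree: a root $a$ over $m_1, m_2$, over $d_1, \dots, d_4$, where each $d_i$ has two arcs to terminals (labels can be chosen so that this is a valid ROBDD). Then $N = 7$ and $r = 1$, yet the cut $S = \{a, d_1, \dots, d_4\}$, $T = \{m_1, m_2, \bot, \top\}$ contains $2 + 8 = 10 > 8 = N + r$ arcs. Here $m_1 \in T$ has in-degree $1$ and its in-arc nevertheless lies in the maximum cut, because both of its children sit in $S$ --- refuting the paper's claim that in-arcs only matter for nodes of in-degree greater than $2$. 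So your restriction to cuts with no arc from $T$ back into $S$ is a necessary correction, not a convenience; and it is also sufficient for every use the paper makes of this bound, since any $1$- or $2$-level cut has this property (an arc $\arc{t}{}{s}$ with $t \in T$ and $s \in S$ would force $j < \Lcal(t) < \Lcal(s) < j + i$, impossible for $i \leq 2$) and these are precisely the cuts that describe the contents of Adiar's priority queues. In short: your proof is sound, its explicitly stated scope is the right scope, and the paper's own argument glosses over precisely the case you point out.
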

\begin{proof}
  We will prove this by induction on the number of internal nodes, $N$.

  For $N=1$, the decision diagram must be a singly rooted DAG with a single node
  $v$ with two outgoing arcs to $\B$, e.g.\ a BDD for the function $x_i$. The
  largest cut is of size $2$ which equals the desired bound.

  Assume for $N'$ that any decision diagrams with $N'$ number of internal nodes
  and some $r'$ number of roots have a maximum cut with a cost of at most $N' +
  r'$. Consider a decision diagram $(V,A)$ with $N=N'+1$ internal nodes and $r
  \geq 1$ number of roots. Let $v$ be one of the $r$ roots. After removing $v$,
  the resulting decision diagram $(V',A')$ has $r' = r +
  \delta_{\abs{\inset(v.\mathit{low})} = 1} +
  \delta_{\abs{\inset(v.\mathit{high})} = 1} - 1$ roots where $\delta$ is the
  indicator function. The number of internal nodes in $(V',A')$ is $N'$ and so
  the maximum size of its cut is by induction $N' + r'$.

  We will now argue, that adding $v$ back into the DAG $(V',A')$ may not
  increase the cut by more than one. Notice, since each node in a decision
  diagram is binary, we may assume that ingoing arcs to a node $v'$ are only
  contributing to a cut if $\abs{\inset(v')} > 2$. Hence, the arc
  $\arc{v}{\bot}{v.\mathit{low}}$ may only contribute to the maximum cut in
  $(V,A)$, if $\abs{\inset(v.\mathit{low})} > 2$. By definition, this means
  $\arc{v}{\bot}{v.\mathit{low}}$ may only contribute to the maximum cut, if
  $\delta_{\abs{\inset(v.\mathit{low})}} = 0$. Symmetrically,
  $\delta_{\abs{\inset(v.\mathit{high})}}$ accounts for whether this very arc
  may be removed from the cut or not. Since
  $\delta_{\abs{\inset(v.\mathit{low})} = 1} +
  \delta_{\abs{\inset(v.\mathit{high})} = 1} \leq 2$, we have $r' \leq r + 1$.
  That is, adding the two arcs of $v$ into $(V',A')$ may only add one arc to the
  maximum cut that is not associated with a root of the DAG and so $N' + r' \leq
  N' + r + 1 = N + r$ is an upper bound on the maximum cut of $(V,A)$ as
  desired.
\end{proof}

By applying this to a single BDD, we obtain the following simple upper bound on
any maximum cut of its DAG.

\begin{theorem} \label{prop:max cut bound}
  The maximum cut of the BDD $f$ has a size of at most $N_f + 1$.
\end{theorem}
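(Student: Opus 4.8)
The plan is to derive Theorem~\ref{prop:max cut bound} as a direct corollary of the preceding lemma. A single BDD $f$ is a special case of a multi-rooted decision diagram, namely one with exactly $r = 1$ root, since by the preliminaries every BDD has a unique source vertex $r_f$. Applying the lemma with $N = N_f$ and $r = 1$ immediately yields that the maximum cut has size at most $N_f + 1$, which is precisely the claimed bound.

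First I would check that the hypotheses of the lemma are genuinely met. The lemma requires $r \geq 1$, and a BDD has a single root, so $r = 1$ satisfies this. I would also note the convention introduced just before the theorem: the artificial arc $\arc{(-\infty)}{}{r_f}$ is added so that $\inset(v) \neq \emptyset$ for all $v \in V$. I would want to confirm that this convention does not disturb the accounting in the lemma's proof — the added arc has a single source $-\infty$ feeding into $r_f$, so $r_f$ retains a single incoming arc and still behaves as the unique root in the sense the lemma uses. This bookkeeping is the only subtle point, since the lemma's inductive argument hinges on counting how many incoming arcs each child has.

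The substantive step is therefore simply the instantiation: set $r = 1$ in the bound $N + r$ to obtain $N_f + 1$. Since the lemma was proved for arbitrary multi-rooted diagrams with $r \geq 1$ roots, and a BDD is exactly the $r = 1$ instance, no further induction or case analysis is needed.

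The main obstacle, such as it is, is not in the derivation but in confirming that a BDD as defined in Section~\ref{sec:preliminaries:bdd} matches the notion of ``multi-rooted decision diagram'' used in the lemma with $r = 1$, and that the maximum cut referred to in the theorem is the same unweighted maximum cut analysed in the lemma. Once these identifications are made, the proof is a one-line application. I would phrase it as: ``By the previous lemma, the maximum cut of a decision diagram with $N$ internal nodes and $r$ roots is at most $N + r$; a BDD is singly rooted, so $r = 1$, giving the bound $N_f + 1$.''
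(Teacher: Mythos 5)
Your proposal matches the paper exactly: the paper derives this theorem by the single remark ``By applying this to a single BDD, we obtain the following simple upper bound,'' i.e.\ it instantiates the preceding lemma with $r = 1$, which is precisely your argument. Your additional check that the artificial arc $\arc{(-\infty)}{}{r_f}$ does not disturb the lemma's accounting is a reasonable piece of due diligence, but the core derivation is the same one-line instantiation the paper uses.
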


This bound is tight for $i$-level cuts, as is evident from
Fig.~\ref{fig:i-level_cut_examples} where the size of the maximum ($i$-level)
cut is $4$. Yet, in general, one can obtain a better upper bound on the maximum
$i$-level cut of the (unreduced) output of each BDD operation when the maximum
$i$-level cut of the input is known.

\begin{theorem} \label{prop:apply 2-level cut} \label{prop:apply 1-level cut}
  For $i \in \{ 1,2 \}$, the maximum $i$-level cut of the (unreduced) output of
  Apply of $f$ and $g$ is at most $C_{i:f} \cdot C_{i:g}$.
\end{theorem}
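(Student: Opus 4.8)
The plan is to exploit the product-construction nature of Apply. Recall that the (unreduced) output DAG $(V', A')$ is built from pairs $(v_f, v_g) \in V_f \times V_g$: each output node corresponds to such a pair, and its label is $\min(\labelof(v_f), \labelof(v_g))$. The key observation I would pursue is that an $i$-level cut in the output, sitting at some level $j$, is induced by a pair of cuts — one in $f$ and one in $g$ — each of which is an $i$-level cut at the corresponding levels. If I can show that every arc crossing the output cut projects to a pair consisting of an arc (or node) crossing an $f$-cut and an arc (or node) crossing a $g$-cut, then the number of crossing arcs is bounded by the product of the two input cut sizes, giving $C_{i:f} \cdot C_{i:g}$.

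Let me first handle $i = 1$, the cleaner case. Fix the output $1$-level cut at level $j$, separating $S = \{v' : \LcalBDD(v') \le j\}$ from $T = \{v' : \LcalBDD(v') > j\}$. An arc $\arc{(v_f, v_g)}{b}{(w_f, w_g)}$ crosses this cut exactly when the source pair has $\min$-label $\le j$ and the target pair has $\min$-label $> j$. I would argue that projecting such a crossing arc onto its $f$-component yields an arc or "stalled" node that crosses the $1$-level cut of $f$ at level $j$ (where a component may stall because in Apply one coordinate stays fixed when its label exceeds $\ell$), and symmetrically for $g$. The map from output crossing-arcs to (f-crossing, g-crossing) pairs should be injective, because an output arc is uniquely determined by its two input components together with the Boolean value $b$ — and the value $b$ is consistent across both coordinates in a single recursive step. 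This injectivity is what converts "project to each factor" into the multiplicative bound $C_{1:f} \cdot C_{1:g}$.

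For $i = 2$ the same strategy applies, but the bookkeeping is the main obstacle. A $2$-level cut allows nodes on the intermediate level $j+1$ to lie on either side of the partition, so the cut is no longer determined solely by $j$; it also depends on the chosen assignment of the middle-level nodes. I would need to check that for any valid $2$-level partition of the output, the induced partitions on $f$ and on $g$ are themselves valid $2$-level partitions (i.e. that an output node's side-assignment is compatible with consistent side-assignments of its two projected components). The delicate point is that the middle level $j+1$ in the output can arise from middle-level nodes in \emph{either} input, and a single output node's placement must be realizable by simultaneously placing its $f$- and $g$-projections; I expect to argue that one can always choose the induced input partitions so that each output crossing arc still projects to a crossing arc in each factor, and then reuse the injectivity argument verbatim.

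\textbf{The hard part} will be making the projection map precise in the presence of \emph{suppressed} recursion steps — the situations where $\labelof(v_f) \ne \labelof(v_g)$ so that one coordinate is held fixed (e.g. $v_f' = v_f$) while the other descends. In those steps one projected "arc" degenerates to a single stationary node rather than a genuine arc, so I must define what it means for such a stationary component to "cross" an input cut and verify it still contributes at most one to the corresponding $C_{i}$ factor. I would resolve this by interpreting a held-fixed node $v_f$ with $\inset(v_f) \neq \emptyset$ (guaranteed by the artificial root arc introduced before the lemma) as itself straddling the input cut, so that it is counted among the $C_{i:f}$ crossing elements. Once the degenerate case is folded into the same counting scheme and injectivity of the projection is confirmed, the product bound $C_{i:f} \cdot C_{i:g}$ follows immediately for both $i \in \{1, 2\}$.
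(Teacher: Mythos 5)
Your high-level strategy -- project every crossing arc of the product DAG onto its two coordinates and count injectively into pairs -- is a viable, arc-level alternative to the paper's proof, which instead argues per node: each output node $(v_f,v_g)$ contributes $\max(\abs{\inset((v_f,v_g))}, \abs{\outset((v_f,v_g))})$ to the $2$-level cut at its level, $\abs{\outset((v_f,v_g))} = 2$, and $\abs{\inset((v_f,v_g))} \leq \abs{\inset(v_f)} \cdot \abs{\inset(v_g)}$. However, your proposal has a genuine gap at exactly the point you flag as ``the hard part,'' and the fix you propose does not close it. Declaring a held-fixed node $w_f$ to be ``itself straddling the input cut'' and ``counted among the $C_{i:f}$ crossing elements'' is not legitimate: $C_{i:f}$ counts \emph{arcs}, not nodes. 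If you enlarge the target set to $\{\text{crossing arcs}\} \cup \{\text{straddling stalled nodes}\}$, injectivity into that set is easy (the two kinds of elements are distinguishable), but that set can be strictly larger than $C_{i:f}$ -- a stalled node \emph{and} the arc through which the construction reached it are then both counted -- so you only obtain a bound of the form $2C_{1:f} \cdot 2C_{1:g}$, not $C_{1:f} \cdot C_{1:g}$.

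The other reading of your fix -- map the stalled node to one of its in-arcs, which the artificial arc $\arc{(-\infty)}{}{r_f}$ guarantees exists -- also fails as stated, for two reasons. First, an arbitrary in-arc of $w_f$ need not cross the cut at $j$ at all: its source may lie at a level strictly between $j$ and $\labelof(w_f)$. Second, your injectivity claim (``an output arc is uniquely determined by its two input components together with $b$'') breaks across cases: a both-moving output arc $\arc{(v_f,v_g)}{b}{(w_f,w_g)}$ maps to the pair $(\arc{v_f}{b}{w_f},\ \arc{v_g}{b}{w_g})$, and an $f$-stalled output arc $\arc{(w_f,v_g')}{b}{(w_f,w_g)}$ whose chosen representative in-arc of $w_f$ happens to be $\arc{v_f}{b}{w_f}$ maps to a pair with the \emph{same} first component, so a collision is possible; both source pairs can be reachable simultaneously. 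The missing idea that repairs both defects is to take as representative the \emph{entry arc}, i.e.\ the arc $\arc{u_f}{}{w_f}$ through which the product path actually entered $w_f$, and to prove a monotonicity lemma: since labels only increase along product paths, if $w_f$ is stalled while its $g$-partner descends to $v_g'$, then $\labelof(u_f) < \labelof(v_g') \leq j$. This shows simultaneously that the entry arc crosses the cut and that no collision can occur (the $g$-partner of a both-moving projection of $\arc{v_f}{b}{w_f}$ sits at level exactly $\labelof(v_f)$, whereas the $g$-partner of a stalled $w_f$ entered via that same arc sits strictly deeper). Your treatment of $i=2$ (``I expect to argue\dots'') inherits the same unresolved issue. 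To be fair, the paper's own proof glosses over precisely this point as well (``all combinations of in-going arcs may potentially exist''), but as your proof is written, the bound $C_{i:f} \cdot C_{i:g}$ does not follow.
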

\begin{proof}
  Let us only consider the more complex case of $i = 2$; the proof for $i = 1$
  follows from the same line of thought.

  Every node of the output represents a tuple $(v_f, v_g)$ where $v_f$, resp.\
  $v_g$, is an internal node of $f$, resp.\ $g$, or is one of the terminals $\B
  = \{ \bot, \top \}$. An example of this situation is shown in
  Fig.~\ref{fig:apply_cut}. The node $(v_f, v_g)$ contributes with
  $\max(\abs{\inset((v_f, v_g))}, \abs{\outset((v_f, v_g))})$ to the maximum
  $2$-level cut at that level. Since it is a BDD node, $\abs{\outset((v_f,v_g))}
  = 2$. We have that $\abs{\inset((v_f, v_g))} \leq \abs{\inset(v_f)} \cdot
  \abs{\inset(v_g)}$ since all combinations of in-going arcs may potentially
  exist and lead to this product of $v_f$ and $v_g$. Expanding on this, we
  obtain
  \begin{align*}
    \abs{\inset((v_f, v_g))}
    &\leq
    \abs{\inset(v_f)} \cdot \abs{\inset(v_g)}
    \\ &\leq
    \max(\abs{\inset(v_f)}, \abs{\outset(v_f)})
      \cdot \max(\abs{\inset(v_g)}, \abs{\outset(v_g)})
    \enspace .
  \end{align*}
  That is, the maximum $2$-level cut for a level is less than or equal to the
  product of the maximum $2$-level cuts of the input at the same level. Taking
  the maximum $2$-level cut across all levels we obtain the final product of
  $C_{2:f}$ and $C_{2:g}$.
\end{proof}

\if\arxiv
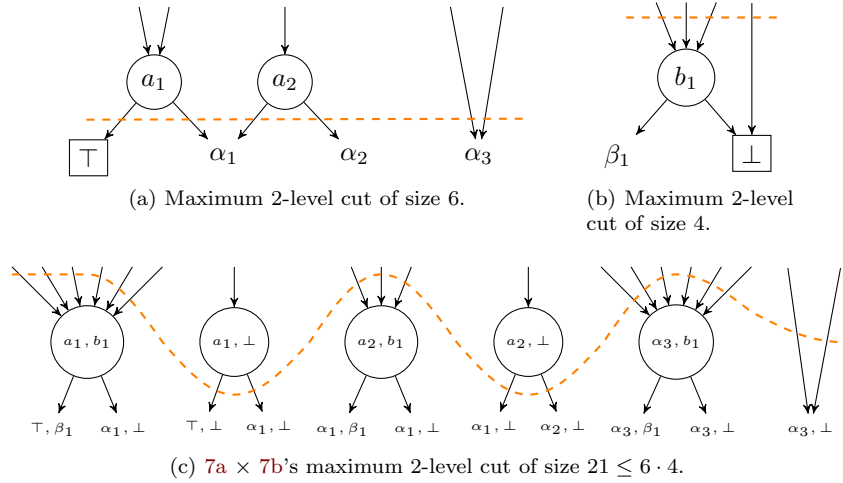
\begin{figure}[ht!]
\else
\begin{figure}[tb]
\fi
  \centering

  \subfloat[Maximum $2$-level cut of size $6$.]{
    \label{fig:apply_cut:a}
    \centering

    \begin{tikzpicture}
      \node[shape = circle, draw = black] (a1) {$a_1$};
      \node[shape = circle, draw = black, right=of a1] (a2) {$a_2$};

      \node[shape = rectangle, draw = black, below left=0.5cm and .35cm of a1] (top) {$\top$};
      \node[below right=0.5cm and .35cm of a1] (alpha1) {$\alpha_1$};
      \node[below right=0.5cm and .35cm of a2] (alpha2) {$\alpha_2$};
      \node[below right=0.5cm and 2.0cm of a2] (alpha3) {$\alpha_3$};

      \draw[->]
        ($ (a1) + (-0.2, 1) $) edge (a1)
        ($ (a1) + (0.2, 1) $)  edge (a1)
        (a1)                   edge (alpha1)
        (a1)                   edge (top)
        ($ (a2) + (0, 1) $)    edge (a2)
        (a2)                   edge (alpha1)
        (a2)                   edge (alpha2)
        ($ (a2) + (2.2, 1) $)  edge (alpha3)
        ($ (a2) + (2.9, 1) $)   ->  (alpha3)
      ;

      \draw[thick, dashed, 2level]
        ($ (a1) - (0.9, 0.5) $) edge ($ (alpha3) + (0.6, 0.5) $)
      ;
    \end{tikzpicture}
  }
  \qquad
  \subfloat[Maximum $2$-level cut of size $4$.]{
    \label{fig:apply_cut:b}
    \centering

    \begin{tikzpicture}
      \node[shape = circle, draw = black] (b1) {$b_1$};

      \node[below left=0.5cm and .35cm of a1]  (beta1) {$\beta_1$};
      \node[shape = rectangle, draw = black, below right=0.5cm and .35cm of a1] (bot) {$\bot$};

      \draw[->]
        ($ (b1) + (-0.4, 1) $) edge (b1)
        ($ (b1) + (0, 1) $)    edge (b1)
        ($ (b1) + (0.4, 1) $)  edge (b1)
        (b1) edge (beta1)
        (b1) edge (bot)
        ($ (bot) + (0, 2) $)    ->  (bot)
      ;

      \draw[thick, dashed, 2level]
        ($ (b1) + (-0.8, 0.8) $) edge ($ (b1) + (1.3, 0.8) $)
      ;
    \end{tikzpicture}
  }

  \subfloat[\ref{fig:apply_cut:a} $\times$ \ref{fig:apply_cut:b}'s maximum $2$-level cut of size $21 \leq 6 \cdot 4$.]{
    \label{fig:apply_cut:prod}
    \centering

    \begin{tikzpicture}
      \node[shape = circle, draw = black]                    (a1_b1)     {\tiny $a_1, b_1$};
      \node[below left=0.6cm and -0.3cm of a1_b1] (a1_b1__1) {\tiny $\top, \beta_1$};
      \node[below right=0.6cm and -0.3cm of a1_b1] (a1_b1__2) {\tiny $\alpha_1, \bot$};

      \node[shape = circle, draw = black, right=of a1_b1]    (a1_bot)  {\tiny $a_1, \bot$};
      \node[below left=0.6cm and -0.3cm of a1_bot] (a1_bot__1) {\tiny $\top, \bot$};
      \node[below right=0.6cm and -0.3cm of a1_bot] (a1_bot__2) {\tiny $\alpha_1, \bot$};

      \node[shape = circle, draw = black, right=of a1_bot] (a2_b1)     {\tiny $a_2, b_1$};
      \node[below left=0.6cm and -0.3cm of a2_b1]  (a2_b1__1) {\tiny $\alpha_1, \beta_1$};
      \node[below right=0.6cm and -0.3cm of a2_b1] (a2_b1__2) {\tiny $\alpha_1, \bot$};
      
      \node[shape = circle, draw = black, right=of a2_b1]    (a2_bot)  {\tiny $a_2, \bot$};
      \node[below left=0.6cm and -0.3cm of a2_bot]  (a2_bot__1) {\tiny $\alpha_1, \bot$};
      \node[below right=0.6cm and -0.3cm of a2_bot] (a2_bot__2) {\tiny $\alpha_2, \bot$};

      \node[shape = circle, draw = black, right=of a2_bot] (alpha3_b1) {\tiny $\alpha_3, b_1$};
      \node[below left=0.6cm and -0.3cm of alpha3_b1]  (alpha3_b1__1) {\tiny $\alpha_3, \beta_1$};
      \node[below right=0.6cm and -0.3cm of alpha3_b1] (alpha3_b1__2) {\tiny $\alpha_3, \bot$};

      \node[below right=0.6cm and 1cm of alpha3_b1] (alpha3_bot) {\tiny $\alpha_3, \bot$};

      \draw[->]
        ($ (a1_b1) + (-1, 1) $)           edge (a1_b1)
        ($ (a1_b1) + (-0.6, 1) $)         edge (a1_b1)
        ($ (a1_b1) + (-0.2, 1) $)         edge (a1_b1)
        ($ (a1_b1) + (0.2, 1) $)          edge (a1_b1)
        ($ (a1_b1) + (0.6, 1) $)          edge (a1_b1)
        ($ (a1_b1) + (1, 1) $)            edge (a1_b1)
        (a1_b1)                           edge (a1_b1__1)
        (a1_b1)                           edge (a1_b1__2)
        ($ (a1_bot) + (0, 1) $)           edge (a1_bot)
        (a1_bot)                          edge (a1_bot__1)
        (a1_bot)                          edge (a1_bot__2)
        ($ (a2_b1) + (-0.4, 1) $)         edge (a2_b1)
        ($ (a2_b1) + (0, 1) $)            edge (a2_b1)
        ($ (a2_b1) + (0.4, 1) $)          edge (a2_b1)
        (a2_b1)                           edge (a2_b1__1)
        (a2_b1)                           edge (a2_b1__2)
        ($ (a2_bot) + (0, 1) $)           edge (a2_bot)
        (a2_bot)                          edge (a2_bot__1)
        (a2_bot)                          edge (a2_bot__2)
        ($ (alpha3_b1) + (-1, 1) $)       edge (alpha3_b1)
        ($ (alpha3_b1) + (-0.6, 1) $)     edge (alpha3_b1)
        ($ (alpha3_b1) + (-0.2, 1) $)     edge (alpha3_b1)
        ($ (alpha3_b1) + (0.2, 1) $)      edge (alpha3_b1)
        ($ (alpha3_b1) + (0.6, 1) $)      edge (alpha3_b1)
        ($ (alpha3_b1) + (1, 1) $)        edge (alpha3_b1)
        (alpha3_b1)                       edge (alpha3_b1__1)
        (alpha3_b1)                       edge (alpha3_b1__2)
        ($ (alpha3_bot) + (-0.3, 2.15) $) edge (alpha3_bot)
        ($ (alpha3_bot) + (0.4, 2.15) $)   -> (alpha3_bot)
      ;
      
      \draw[thick, dashed, 2level]
        ($ (a1_b1) + (-1.0, 0.9) $) --
        ($ (a1_b1) + (0.0, 0.9) $) cos
        ($ (a1_b1) + (1, 0) $) sin
        ($ (a1_bot) + (0, -0.7) $) cos
        ($ (a1_bot) + (1, 0) $) sin
        ($ (a2_b1) + (0, 0.9) $) cos
        ($ (a2_b1) + (1, 0) $) sin
        ($ (a2_bot) + (0, -0.7) $) cos
        ($ (a2_bot) + (1, 0) $) sin
        ($ (alpha3_b1) + (0, 0.9) $) cos
        ($ (alpha3_b1) + (1, 0.4) $) sin
        ($ (alpha3_b1) + (2.2, 0) $)
      ;
    \end{tikzpicture}
  }
  
  \caption{Relation between the maximal $2$-level cut of two BDDs' internal arcs
    and the maximum $2$-level cut of their product.}
  \label{fig:apply_cut}
\end{figure}

The bounds in Thm.~\ref{prop:apply 2-level cut} are better than what can be
derived from Thm.~\ref{prop:max cut bound} since $C_{i:f}$ and $C_{i:g}$ are
themselves cuts and hence their product must be at most the bound based on the
possible number of nodes. They are also tight: the maximum $i$-level cut for $i
\in \{1,2\}$ of the BDDs for the variables $x_0$ and $x_1$ in
Fig~\ref{fig:bdd_example:x0} and \ref{fig:bdd_example:x1} both have size $2$
while the BDD for the exclusive-or of them in Fig.~\ref{fig:bdd_example:xor}
has, as shown in Fig.~\ref{fig:i-level_cut_examples}, a maximum $i$-level cut of
size $4$.

Since the maximum $1$-level cut also bounds the number of outgoing arcs of all
nodes on each level, one can derive an upper bound on the output's width. That
is, based on Thm.~\ref{prop:apply 1-level cut} we can obtain the following
interesting result.
\begin{corollary} \label{prop:apply width}
  The \emph{width} of Apply's output is less than or equal to $\tfrac{1}{2}
  \cdot C_{1:f} \cdot C_{1:g}$.
\end{corollary}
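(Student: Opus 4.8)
The plan is to bound the width at each individual level by half of the $1$-level cut at that level, and then combine this with Theorem~\ref{prop:apply 1-level cut}. First I would recall that the \emph{width} of the (unreduced) output is the maximum, taken over all finite levels $j$, of the number of internal nodes carrying that level, and that by the levelisation property (Definition~\ref{def:levelization}) every arc leaving a node at level $j$ must reach a node at a strictly later level. In particular, for the levelisation $\LcalBDD$ this includes arcs into the terminals $\B = \{\bot,\top\}$, which sit at level $\infty$.

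Next I would fix the level $j^*$ that realises the width $W$ of the output, and consider the $1$-level cut $(S,T)$ at $j^*$, namely $S = \{ v \mid \LcalBDD(v) \leq j^* \}$ and $T = \{ v \mid \LcalBDD(v) > j^* \}$, which is a valid $1$-level cut by Definition~\ref{def:i-level cut}. Each of the $W$ internal nodes $v$ at level $j^*$ has $\abs{\outset(v)} = 2$, since the output is a (unreduced) decision diagram in which every internal node is binary, and both of these outgoing arcs point to strictly higher levels, hence cross from $S$ to $T$. Therefore this particular $1$-level cut has size at least $2W$, and so the \emph{maximum} $1$-level cut of the output is at least $2W$ as well.

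Finally, combining this with Theorem~\ref{prop:apply 1-level cut}, which bounds the maximum $1$-level cut of Apply's output by $C_{1:f} \cdot C_{1:g}$, gives $2W \leq C_{1:f} \cdot C_{1:g}$, i.e.\ $W \leq \tfrac{1}{2} \cdot C_{1:f} \cdot C_{1:g}$, as desired.

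The only point that requires care -- and the main, albeit minor, obstacle -- is to justify the factor $\tfrac{1}{2}$ soundly, i.e.\ to confirm that each of the two outgoing arcs of a level-$j^*$ node genuinely crosses the cut and is counted in its (unweighted) size. This holds precisely because all such arcs are directed to strictly higher levels, including the arcs terminating in $\bot$ or $\top$, which lie in $T$ at level $\infty$. Any further arcs that skip over level $j^*$ (from some $s$ with $\LcalBDD(s) < j^*$ to some $t$ with $\LcalBDD(t) > j^*$) only enlarge the cut, so they never threaten the inequality; the bound $2W \leq C_{1:f}\cdot C_{1:g}$ remains valid regardless.
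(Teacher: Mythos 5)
Your proof is correct and is essentially the paper's own argument, just spelled out in full: the paper's one-line proof ("the $\tfrac{1}{2}$ compensates for the outdegree of each BDD node") is precisely your observation that the $W$ binary nodes on the widest level contribute $2W$ arcs to the $1$-level cut directly below it, which Theorem~\ref{prop:apply 1-level cut} bounds by $C_{1:f} \cdot C_{1:g}$.
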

\begin{proof}
  The $\tfrac{1}{2}$ compensates for the outdegree of each BDD node.
\end{proof}

This is only an upper bound, as half of the arcs that cross the widest level are
also counted. Yet, it is tight, as Fig.~\ref{fig:i-level_cut_examples} has a
maximum $i$-level cut of size $4$ and a width of $2$.

Thm.~\ref{prop:apply 2-level cut} is of course only an over-approximation. The
gap between the upper bound and the actual maximum $i$-level cut arises because
Thm.~\ref{prop:apply 2-level cut} does not account for pairs $(v_f,v_g)$, where
node $v_f$ sits above $f$'s maximum $2$-level cut and $v_g$ sits below $g$'s
maximum $2$-level cut, and vice versa. In this case, outgoing arcs of $v_f$ are
paired with ingoing arcs of $v_g$, even though this would be strictly larger
than the arcs of their product. Furthermore, similar to Thm.~\ref{prop:max cut
  bound}, this bound does not account for arcs that cannot be paired as they
reflect conflicting assignments to one or more input variables. For example, in
the case where the out-degree is greater for both nodes, the above bound
mistakenly pairs the low arcs with the high arcs and vice versa.

\subsection{Improving Bounds by Accounting for Terminal Arcs} \label{sec:cut:terminal}

Some of the imprecision in the over-approximation of Thm.~\ref{prop:apply
  2-level cut} highlighted above can partially be addressed by explicitly
accounting for the arcs to each terminal. For $B \subseteq \B$, let $w_B$ be the
weight function that only cares for arcs to internal BDD nodes and to the
terminals in $B$.
\begin{equation*}
  w_B(\arc{s}{b}{t}) =
  \begin{cases}
    1 &\text{if } t \in V \setminus \B \text{ or }  t \in B
    \\
    0 &\text{otherwise}
  \end{cases}
  \enspace .
\end{equation*}
Let $C_{i:f}^B$ be the maximum $i$-level cut of $f$ with respect to $\LcalBDD$
and $w_B$.

The constant hidden within the $O(\abs{V}\cdot\abs{A})$ running time of the
algorithm in the proof of Prop.~\ref{prop:max 1-level complexity} is smaller
than the one in the proof of Prop.~\ref{prop:max 2-level complexity}. Hence, the
following slight over-approximations of $C_{2:f}^B$ given $C_{1:f}^B$ may be
useful.

\begin{lemma} \label{prop:2-level from 1-level cut:1}
  The maximum $2$-level cut $C_{2:f}^\emptyset$ is less than or equals to
  $\tfrac{3}{2} \cdot C_{1:f}^\emptyset$.
\end{lemma}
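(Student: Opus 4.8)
The plan is to express the maximum $2$-level cut in terms of the two $1$-level cuts flanking its free level, and then to exploit two structural facts about BDDs: every BDD node has at most two out-going arcs, so its $w_\emptyset$-weighted out-degree is at most $2$; and the augmented arc $\arc{(-\infty)}{}{r_f}$ guarantees $\abs{\inset(v)} \geq 1$ for every internal node. These two facts together are exactly what turns the trivial factor of $2$ into the claimed $\tfrac{3}{2}$.

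First I would fix a window realising $C_{2:f}^\emptyset$, with free level $\ell$ (the level $j+1$ of Definition~\ref{def:i-level cut}): every node below $\ell$ lies in $S$, every node above $\ell$ lies in $T$, and each node $v$ at level $\ell$ is placed on whichever side is more favourable. Following the greedy analysis in the proof of Prop.~\ref{prop:max 2-level complexity}, such a node contributes exactly $\max(\abs{\inset(v)}, \abs{\outset(v)})$ to the cut, where in- and out-degrees are measured with $w_\emptyset$: its in-arcs all originate below $\ell$ (hence from $S$) and cross iff $v \in T$, while its out-arcs all target levels above $\ell$ (hence in $T$) and cross iff $v \in S$. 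Letting $k$ denote the total $w_\emptyset$-weight of the arcs that skip level $\ell$ entirely (from a level below $\ell$ to one above it), which cross in every such cut, I obtain
\[
  C_{2:f}^\emptyset = k + \sum_{v\,:\,\LcalBDD(v) = \ell} \max\!\bigl(\abs{\inset(v)},\, \abs{\outset(v)}\bigr)
  \enspace .
\]
The next step is to recognise the two flanking $1$-level cuts. The $1$-level cut separating the levels below $\ell$ from level $\ell$ and above has weight $k + I$, and the one separating level $\ell$ and below from the levels above has weight $k + O$, where $I = \sum_{\LcalBDD(v)=\ell}\abs{\inset(v)}$ and $O = \sum_{\LcalBDD(v)=\ell}\abs{\outset(v)}$. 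Both quantities are at most $C_{1:f}^\emptyset$.

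The heart of the argument is the per-node inequality $\max(\abs{\inset(v)}, \abs{\outset(v)}) \leq \abs{\inset(v)} + \tfrac{1}{2}\abs{\outset(v)}$ for each free node $v$. When $\abs{\inset(v)} \geq \abs{\outset(v)}$ this is immediate. When $\abs{\outset(v)} > \abs{\inset(v)}$, I use that $v$ is a BDD node, so $\abs{\outset(v)} \leq 2$, and that $\abs{\inset(v)} \geq 1$, whence $\abs{\outset(v)} \leq 2 \leq 2\abs{\inset(v)}$, which rearranges to the claim. Summing over the free level and using $I + k \leq C_{1:f}^\emptyset$ and $O + k \leq C_{1:f}^\emptyset$ yields
\[
  C_{2:f}^\emptyset \leq k + I + \tfrac{1}{2}O
  = (k + I) + \tfrac{1}{2}(k + O) - \tfrac{1}{2}k
  \leq \tfrac{3}{2}\,C_{1:f}^\emptyset - \tfrac{1}{2}k
  \leq \tfrac{3}{2}\,C_{1:f}^\emptyset
  \enspace .
\]

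The main obstacle is precisely this asymmetric per-node bound: a naive $\max(a,b) \leq a + b$ only gives a factor of $2$, so the $\tfrac{3}{2}$ hinges on charging the out-degree direction at half weight, which is legitimate only because a BDD node's out-degree is bounded by the constant $2$ while its in-degree is at least $1$. I would also take care to note that passing to $w_\emptyset$ discards terminal arcs, which can only decrease out-degrees and therefore preserves $\abs{\outset(v)} \leq 2$, and that both flanking $1$-level cuts are genuinely dominated by $C_{1:f}^\emptyset$ so that the final combination goes through.
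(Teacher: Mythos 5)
Your proof is correct and is in substance the same argument as the paper's: both hinge on the facts that a node's $w_\emptyset$-weighted out-degree is at most $2$ while the pseudo-arc $\arc{(-\infty)}{}{r_f}$ forces $\abs{\inset(v)} \geq 1$, and both bound the free level's contribution against the two flanking $1$-level cuts, each dominated by $C_{1:f}^\emptyset$. Your pointwise inequality $\max(\abs{\inset(v)}, \abs{\outset(v)}) \leq \abs{\inset(v)} + \tfrac{1}{2}\abs{\outset(v)}$ is just a cleaner, per-node rendering of the paper's global count that only nodes with in-degree $1$ and out-degree $2$ profit from moving to $S$, and that at most $C_{1:f}^\emptyset/2$ such nodes exist.
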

\begin{proof}
  $C_{1:f}^\emptyset$ is an upper bound on the number of ingoing arcs to nodes
  on level $j+1$ for any $j$. This places the BDD nodes $v$ with $\Lcal(v) =
  j+1$ in the $S$ partition of the $1$-level cut. The only case where such a $v$
  should be moved to the $S$ partition for the maximum $2$-level cut at level
  $j$ is if $\abs{\inset(v)} = 1$ and $\abs{\outset(v)} = 2$ in the subgraph
  only consisting of internal arcs. Since $C_{1:f}^\emptyset$ is also an upper
  bound on the number of outgoing arcs then at most $C_{1:f}^\emptyset / 2$
  nodes at level $j$ may be moved to $S$ to then count their $C_{1:f}^\emptyset$
  outgoing arcs. This leaves $C_{1:f}^\emptyset / 2$ ingoing arcs still to be
  counted. Combining both, we obtain the bound above.
\end{proof}

\begin{lemma} \label{prop:2-level from 1-level cut:2}
  For $B \subseteq \B$, $C_{2:f}$ is at most $\tfrac{1}{2} \cdot
  C_{1:f}^\emptyset + C_{1:f}^B$.
\end{lemma}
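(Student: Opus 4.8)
The plan is to prove the inequality one split point at a time and then maximise over split points. Fix a level $j$ and consider the $w_B$-weighted $2$-level cut that is free to place the nodes of level $j+1$ on either side; it is this weighted quantity $C_{2:f}^B$ that the preceding paragraph is over-approximating, so I read the left-hand side accordingly. Write $M$ for the set of nodes on level $j+1$, and for $v \in M$ let $d^-_v = \abs{\inset(v)}$ (every in-arc of an internal node is internal, so this is unaffected by $B$) and let $d^+_v$ be the $w_B$-weighted out-degree of $v$. By the greedy characterisation in the proof of Prop.~\ref{prop:max 2-level complexity}, each $v \in M$ is placed so as to contribute $\max(d^-_v, d^+_v)$, while any arc skipping over level $j+1$ (from level $\leq j$ to a node or terminal at level $\geq j+2$) crosses no matter how $M$ is split. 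Hence the maximum $w_B$-weighted $2$-level cut at $j$ equals $L + \sum_{v \in M}\max(d^-_v, d^+_v)$, where $L$ is the $w_B$-weight of the skipping arcs.

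The key step is the elementary identity $\max(a,b) = \tfrac12(a+b) + \tfrac12\abs{a-b}$. Applying it termwise, the part $L + \sum_{v\in M}\tfrac12(d^-_v + d^+_v)$ reassembles into $\tfrac12$ of the $w_B$-weighted $1$-level cut at $j$ plus $\tfrac12$ of the one at $j+1$: the in-arcs of $M$ together with the skipping arcs are exactly the arcs crossing at $j$, the out-arcs of $M$ together with the skipping arcs are exactly those crossing at $j+1$, and the shared skipping weight $L$ is counted with total coefficient $1$. Since each of these two $1$-level cuts is at most $C_{1:f}^B$ by definition of the maximum, this part contributes at most $C_{1:f}^B$. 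It then remains to bound the leftover term $\tfrac12\sum_{v\in M}\abs{d^-_v - d^+_v}$ by $\tfrac12 C_{1:f}^\emptyset$.

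The main obstacle is precisely this last bound, because $d^+_v$ counts terminal arcs whereas $C_{1:f}^\emptyset$ ignores them; the point is that those terminal arcs inflate the out-degree but never the in-degree, and that is what lets the difference term be charged to purely internal arcs. I would split $M$ into the nodes with $d^+_v > d^-_v$ and the rest. For a node with $d^+_v > d^-_v$, the root-arc convention $\inset(v) \neq \emptyset$ gives $d^-_v \geq 1$, and since out-degrees are at most $2$ this forces $d^-_v = 1$ and $d^+_v = 2$, so $\abs{d^-_v - d^+_v} = 1 = d^-_v$; for every other node $\abs{d^-_v - d^+_v} = d^-_v - d^+_v \leq d^-_v$. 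In both cases the contribution is at most $d^-_v$, whence $\sum_{v\in M}\abs{d^-_v - d^+_v} \leq \sum_{v\in M} d^-_v$, the number of internal in-arcs to level $j+1$. These arcs all cross the internal $1$-level cut at $j$, so their count is at most $C_{1:f}^\emptyset$.

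Combining the two parts yields $L + \sum_{v\in M}\max(d^-_v,d^+_v) \leq C_{1:f}^B + \tfrac12 C_{1:f}^\emptyset$ for every $j$, and taking the maximum over $j$ gives the claim. I expect the termwise identity and the indegree-$1$ charging argument to be the whole substance of the proof; the only subtlety worth stating carefully is the asymmetry between in- and out-arcs under $w_B$, which is exactly what keeps the difference term within the internal cut $C_{1:f}^\emptyset$. As a sanity check, setting $B = \emptyset$ recovers Lemma~\ref{prop:2-level from 1-level cut:1}, since then $C_{1:f}^B = C_{1:f}^\emptyset$ and the bound collapses to $\tfrac32 C_{1:f}^\emptyset$.
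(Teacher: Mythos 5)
Your proof is correct, and it takes a genuinely different — and in fact more watertight — route than the paper's. The paper's argument splits the $w_B$-weighted $2$-level cut into a terminal-arc part, bounded by $C_{1:f}^B - C_{1:f}^\emptyset$, plus an internal-arc part handled exactly as in Lem.~\ref{prop:2-level from 1-level cut:1}, i.e.\ $\tfrac{3}{2} \cdot C_{1:f}^\emptyset$, and then simplifies. That sketch glosses over two points that your argument handles cleanly: first, the terminal weight crossing the cut at the relevant level can exceed $C_{1:f}^B - C_{1:f}^\emptyset$, since the two maxima need not be attained at the same level (your averaging of the two $1$-level cuts at $j$ and $j{+}1$, each bounded by $C_{1:f}^B$, avoids comparing maxima across levels entirely); second, Lem.~\ref{prop:2-level from 1-level cut:1}'s count of movable nodes charges each moved node to its two \emph{internal} outgoing arcs, but under $w_B$ a node worth moving may have one or both of its weighted out-arcs pointing to terminals in $B$, so that charge is unavailable — your termwise identity $\max(a,b) = \tfrac{1}{2}(a+b) + \tfrac{1}{2}\abs{a-b}$ together with the charge $\abs{d^-_v - d^+_v} \leq d^-_v$ (which holds simply because $d^+_v \leq 2 \leq 2 d^-_v$, using the root-arc convention) shifts the entire imbalance onto the internal \emph{in}-arcs, which is exactly what keeps the correction term within $\tfrac{1}{2} \cdot C_{1:f}^\emptyset$. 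Your reading of the left-hand side as the $w_B$-weighted cut $C_{2:f}^B$ matches the paper's implicit intent (the missing superscript is evidently a typo), and as a bonus your argument re-proves Lem.~\ref{prop:2-level from 1-level cut:1} as the case $B = \emptyset$, whereas the paper needs that lemma as an ingredient.
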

\begin{proof}
  The $C_{1:f}^B - C_{1:f}^\emptyset$ is the number of arcs to terminals. The
  remaining $C_{1:f}^\emptyset$ may be arcs to a BDD node where up to
  $C_{1:f}^\emptyset / 2$ can, as in Lem.~\ref{prop:2-level from 1-level cut:1},
  be moved to the other side of the cut to increase the $2$-level cut with
  $C_{1:f}^\emptyset / 2$. Simplifying $\tfrac{3}{2} + (C_{1:f}^B -
  C_{1:f}^\emptyset)$ we obtain the desired bound.
\end{proof}

Finally, we can tighten the bound in Thm.~\ref{prop:apply 2-level cut} by making
sure (1) not to unnecessarily pair terminals in $f$ with terminals in $g$ and
(2) not to pair terminals from $f$ and $g$ with nodes of the other when said
terminal shortcuts the operator.
\begin{lemma}
  The maximum $2$-level cut of the (unreduced) output $f \odot g$ of Apply
  excluding arcs to terminals, $C_{2:f \odot g}^\emptyset$, is at most
  \begin{equation*}
      C_{2:f}^{B_{\mathit{left}(\odot)}} \cdot C_{2:g}^\emptyset
    + C_{2:f}^\emptyset \cdot C_{2:g}^{B_{\mathit{right}(\odot)}}
    - C_{2:f}^\emptyset \cdot C_{2:g}^\emptyset
    \enspace ,
  \end{equation*}
  where $B_{\mathit{left}(\odot)}, B_{\mathit{right}(\odot)} \subseteq \B$ are
  the terminals that do not shortcut $\odot$.
\end{lemma}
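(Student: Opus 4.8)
The plan is to refine the counting argument from the proof of Thm.~\ref{prop:apply 2-level cut}. As there, I would fix a level, view every internal node of the output as a pair $(v_f, v_g)$, and view every crossing arc of the output as induced by a pair $(a_f, a_g)$, where $a_f$ is an arc of $f$ (or a ``stay'' at $v_f$ when its level does not advance) and $a_g$ is the analogous object in $g$; the product $C_{2:f} \cdot C_{2:g}$ then counts \emph{all} such pairs at the worst level. The difference now is that $C_{2:f \odot g}^\emptyset$ only counts output arcs whose target is an internal node, so the task reduces to discarding exactly those pairs $(a_f,a_g)$ whose combined target $(v_f', v_g')$ is immediately resolved to a terminal by Apply.

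First I would classify the arcs counted by each input cut according to their target: arcs to an internal node, arcs to a non-shortcutting terminal, and arcs to a shortcutting terminal. Here a left (resp.\ right) terminal $b$ \emph{shortcuts} $\odot$ when $b \odot \top = b \odot \bot$ (resp.\ $\top \odot b = \bot \odot b$), i.e.\ when pairing $b$ with any node of the opposite diagram is immediately resolved to a terminal. By definition $C_{2:f}^\emptyset$ counts only the internal-target arcs, while $C_{2:f}^{B_{\mathit{left}(\odot)}}$ additionally counts the arcs to the non-shortcutting terminals $B_{\mathit{left}(\odot)}$; the analogous statement holds for $g$ with $B_{\mathit{right}(\odot)}$.

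The key step is a case analysis on the target pair $(v_f',v_g')$: it is a genuine internal node of the output precisely when neither component is a shortcutting terminal and the two are not both terminals. Equivalently, an internal output arc must have at least one internal component, so the internal-producing pairs form the union of (i) the pairs whose $g$-component is internal and whose $f$-component is internal or a non-shortcutting terminal, and (ii) the symmetric set with the roles of $f$ and $g$ swapped. The contributions of these two sets are bounded by $C_{2:f}^{B_{\mathit{left}(\odot)}} \cdot C_{2:g}^\emptyset$ and $C_{2:f}^\emptyset \cdot C_{2:g}^{B_{\mathit{right}(\odot)}}$, while their intersection---the pairs with both components internal---contributes $C_{2:f}^\emptyset \cdot C_{2:g}^\emptyset$. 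Inclusion--exclusion over this union yields the stated bound, and taking the maximum over all levels finishes the argument exactly as in Thm.~\ref{prop:apply 2-level cut}.

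I expect the main obstacle to be the bookkeeping that legitimises treating the per-level cut as a count of target-classified arc pairs while still respecting the $\max(\abs{\inset((v_f,v_g))},\abs{\outset((v_f,v_g))})$ contribution used in Thm.~\ref{prop:apply 2-level cut}: I must check that dropping shortcutting-terminal targets and terminal--terminal targets is sound on both the in-going and out-going side of a middle node, and that the union/intersection decomposition covers every internal-producing pair exactly once after inclusion--exclusion. The $(\text{internal},\text{non-shortcut})$ and $(\text{non-shortcut},\text{internal})$ cases are precisely what force the two asymmetric products and prevent a single symmetric term from sufficing. Checking the shortcut definition against concrete operators (e.g.\ $\bot$ for conjunction and $\top$ for disjunction) would serve as a sanity check that the correct terminals are retained in $B_{\mathit{left}(\odot)}$ and $B_{\mathit{right}(\odot)}$.
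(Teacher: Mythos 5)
Your classification of the pairs is sound and matches the intuition the paper states just before the lemma (for which it offers no proof of its own): per level, the internal nodes of the unreduced product arise exactly from the pairs in the union of your sets (i) and (ii), whose intersection is the set of internal--internal pairs, and your definition of ``shortcutting'' is the right one. The genuine gap is in the final counting step. Write $a_j, A_j$ for the sizes, at a fixed level $j$, of $f$'s cut counted by $w_\emptyset$ and by $w_{B_{\mathit{left}(\odot)}}$, and $b_j, B_j$ analogously for $g$. Exact inclusion--exclusion gives the per-level count $A_j b_j + a_j B_j - a_j b_j$. You then pass to the stated bound by replacing all three terms with products of the \emph{global} maxima, but the third term carries a negative sign: an upper bound on a union needs a \emph{lower} bound on the intersection, whereas $C_{2:f}^\emptyset \cdot C_{2:g}^\emptyset$ is an \emph{upper} bound on $a_j b_j$ --- and typically a strict one, since the maxima $C_{2:f}^\emptyset$, $C_{2:g}^\emptyset$, $C_{2:f}^{B_{\mathit{left}(\odot)}}$, $C_{2:g}^{B_{\mathit{right}(\odot)}}$ need not be attained at level $j$, nor at any common level. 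So ``inclusion--exclusion over this union yields the stated bound'' is not a valid inference as written; over-subtracting could in principle leave you below the true count.

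The lemma is nevertheless true, and your decomposition can be repaired by avoiding the signed count altogether: split the union disjointly into (i) together with the part of (ii) outside (i), and substitute maxima in an order where every replaced factor has a non-negative coefficient,
\begin{align*}
  A_j b_j + a_j B_j - a_j b_j
  &= A_j b_j + a_j (B_j - b_j)
  \\
  &\leq C_{2:f}^{B_{\mathit{left}(\odot)}}\, b_j + C_{2:f}^\emptyset\, (B_j - b_j)
  \\
  &= \bigl(C_{2:f}^{B_{\mathit{left}(\odot)}} - C_{2:f}^\emptyset\bigr)\, b_j + C_{2:f}^\emptyset\, B_j
  \\
  &\leq \bigl(C_{2:f}^{B_{\mathit{left}(\odot)}} - C_{2:f}^\emptyset\bigr)\, C_{2:g}^\emptyset + C_{2:f}^\emptyset\, C_{2:g}^{B_{\mathit{right}(\odot)}}
  \enspace ,
\end{align*}
which is exactly the right-hand side of the lemma. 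The first inequality is sound because the coefficients $b_j$ and $B_j - b_j$ are non-negative; the second because $C_{2:f}^{B_{\mathit{left}(\odot)}} \geq C_{2:f}^\emptyset$ (the weight $w_{B_{\mathit{left}(\odot)}}$ dominates $w_\emptyset$) and $C_{2:f}^\emptyset \geq 0$. Note that the $f$-quantities must be replaced before the $g$-quantities (or symmetrically): replacing $b_j$ by $C_{2:g}^\emptyset$ too early gives it the coefficient $A_j - a_j$ of a single level, which is fine, but then replacing $a_j$ afterwards has coefficient $B_j - C_{2:g}^\emptyset$, which can be negative. With this regrouping in place of the naive inclusion--exclusion, the rest of your argument --- the shortcut case analysis and taking the maximum over all levels as in Thm.~\ref{prop:apply 2-level cut} --- goes through.
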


\subsection{Maximum Levelised Cuts in ZDD Manipulation} \label{sec:cut:zdd}

The results in Section~\ref{sec:cut:bdd} and \ref{sec:cut:terminal} are loosely
yet subtly coupled to the reduction rules of BDDs. Specifically,
Thm.~\ref{prop:max cut bound} is applicable to ZDDs as-is but
Thm.~\ref{prop:apply 2-level cut} and its derivatives provide unsound bounds for
ZDDs. This is due to the fact that, unlike for BDDs, a suppressed ZDD node may
re-emerge during a ZDD product construction algorithm. For example in the case
of the \emph{union} operation, when processing a pair of nodes with two
different levels, its high child becomes the product of a node $v$ in one ZDD
and the $\bot$ terminal in the other -- even if there was no arc to $\bot$ in
the original two cuts for $f$ and $g$.

The solution is to introduce another special arc similar to
$\arc{(-\infty)}{}{r_f}$ which accounts for this specific case: if there are no
arcs to $\bot$ to pair with, then the arc $\arc{(-\infty)}{}{\bot}$ is counted
as part of the input's cut. That is, all prior results for BDDs apply to ZDDs,
assuming $C_{i:f}^B$ is replaced with $\mathit{ZC}_{i:f}^B$ defined to be
\begin{equation*}
  \mathit{ZC}_{i:f}^B =
  \begin{cases}
    C_{i:f}^B + 1 & \text{if } \bot \in B \text{ and } C_{i:f}^B = C_{i:f}^{B \setminus \{\bot\}}
    \\
    C_{i:f}^B     & \text{otherwise}
  \end{cases}
  \enspace .
\end{equation*}

\subsection{Adding Levelised Cuts to Adiar's Algorithms} \label{sec:cut:adiar}

The description of Adiar in Section~\ref{sec:preliminaries:adiar} leads to the
following observations.
\begin{itemize}
\item The contents of the priority queues in the top-down Apply algorithms are
  always a $1$-level or a $2$-level cut of the input or of the output --
  possibly excluding arcs to one or both terminals.

\item The contents of the priority queue in the bottom-up Reduce algorithm are
  always a $1$-level cut of the input, excluding any arcs to terminals.
\end{itemize}
Specifically, the priority queues always contain an $i$-level cut $(S,T)$, where
$S$ is the set of processed diagram nodes and $T$ is the set of yet unresolved
diagram nodes. For example, the $2$-level cuts depicted in
Fig.~\ref{fig:i-level_cut_examples} reflect the states of the top-down priority
queue within the Apply to compute the exclusive-or of
Fig.~\ref{fig:bdd_example:x0} and \ref{fig:bdd_example:x1} to create
Fig.~\ref{fig:bdd_example:xor}. In turn, the $1$-level cuts in
Fig.~\ref{fig:i-level_cut_examples} are also the state of the bottom-up priority
queue of the Reduce sweep that follows.

\if\arxiv1
\begin{figure}[ht!]
\else
\begin{figure}[b]
\fi
  \centering

  \begin{tikzpicture}[every text node part/.style={align=center}]
    \draw (0,0) rectangle ++(2,1)
    node[pos=.5]{\texttt{Apply}};
    \draw (4,0) rectangle ++(2,1)
    node[pos=.5]{\texttt{Reduce}};

    \draw[->] (-0.5,0.9) -- ++(0.5,0)
    node[pos=-2.1]{\footnotesize $f$ \texttt{nodes}, $C_{i:f}^B$};

    \draw[->] (-0.5,0.5) -- ++(0.5,0)
    node[pos=-0.5]{$\odot$};

    \draw[->] (-0.5,0.1) -- ++(0.5,0)
    node[pos=-2.1]{\footnotesize $g$ \texttt{nodes}, $C_{i:g}^B$};

    \draw[->] (2,0.9) -- ++(2,0)
    node[pos=0.5,above]{\scriptsize internal \texttt{arcs}};

    \draw[->] (2,0.5) -- ++(2,0)
    node[pos=0.5, fill=white]{\footnotesize $C_{1:f \odot g}^\emptyset$};

    \draw[->] (2,0.1) -- ++(2,0)
    node[pos=0.5,below]{\scriptsize terminal \texttt{arcs}};

    \draw[->] (6,0.5) -- ++(0.5,0)
    node[pos=3.8]{\footnotesize $f \odot g$ \texttt{nodes}, $C_{i:f \odot g}^B$};
  \end{tikzpicture}

  \caption{The Apply--Reduce pipeline in \Adiar\ with $i$-level cuts.
    \if\arxiv
    $C_{i:f}^B$ is any $i$-level cut for $i \in \{1,2\}$ and $B \subseteq \B$.\fi}
  \label{fig:tandem with cuts}
\end{figure}

Hence, the upper bounds on the $1$ and $2$-level cuts in
Section~\ref{sec:cut:bdd}, \ref{sec:cut:terminal}, and \ref{sec:cut:zdd} are
also upper bounds on the size of all auxiliary data structures. That is, upper
bounds on the $i$-level cuts of the input can be used to derive a sound
guarantee of whether the much faster internal memory variants can fit into
memory. To only add a minimal overhead to the performance, computing these
$i$-level cuts should be done as part of the preceding algorithm that created
the very input. This extends the tandem in Fig.~\ref{fig:tandem} as depicted in
Fig.~\ref{fig:tandem with cuts} with the $i$-level cuts necessary for the next
algorithm.

What is left is to compute within each sweep an upper bound on these cuts.

\subsubsection{$1$-Level Cut within Top-down Sweeps} \label{sec:cut:adiar:apply}

The priority queues of a top-down sweep only contain arcs between non-terminal
nodes of its output. While their contents in general form a $2$-level cut, the
sweep also enumerates all $1$-level cuts when it has finished processing one
level, and is about to start processing the next. That is, the top-down
algorithm that constructs the unreduced decision diagram $(V',A')$ for $f'$ can
compute $C_{1:f'}^\emptyset$ in $O(\abs{\LcalBDD(V')})$ time by accumulating the
maximum size of its own priority queue when switching from one level to another.
The number of I/O operations is not affected at all.

\subsubsection{$i$-Level Cuts within the Bottom-up Reduce} \label{sec:cut:adiar:reduce}

To compute the $1$-level and $2$-level cuts of the output during the Reduce
algorithm, the algorithms in the proofs of Prop.~\ref{prop:max 1-level
  complexity} and \ref{prop:max 2-level complexity} need to be incorporated.
Since the Reduce algorithm works bottom-up, it cannot compute these cuts
exactly: the bottom-up nature only allows information to flow from lower levels
upwards while an exact result also requires information to be passed downwards.
Specifically, Fig.~\ref{fig:red1_chain_case} shows an unreduced BDD whose
maximum $1$ and $2$-level cut is increased due to the reduction removing nodes
above the cut. Both over-approximation algorithms below are tight since for the
input in Fig.~\ref{fig:red1_chain_case} they compute the exact result.

\if\arxiv
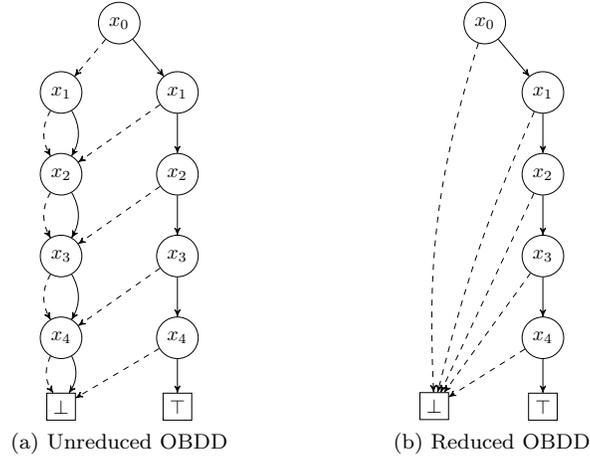
\begin{figure}[t]
\else
\begin{figure}[t]
\fi
  \centering

  \subfloat[Unreduced OBDD]{
    \hspace{1cm}\scalebox{0.75}{\begin{tikzpicture}
      \node[shape = circle, draw = black]                                  (0)   {$x_0$};
      \node[shape = circle, draw = black, below left=0.7cm and .5cm of 0]  (1_0) {$x_1$};
      \node[shape = circle, draw = black, below right=0.7cm and .5cm of 0] (1_1) {$x_1$};
      \node[shape = circle, draw = black, below=0.7cm of 1_0]              (2_0) {$x_2$};
      \node[shape = circle, draw = black, below=0.7cm of 1_1]              (2_1) {$x_2$};
      \node[shape = circle, draw = black, below=0.7cm of 2_0]              (3_0) {$x_3$};
      \node[shape = circle, draw = black, below=0.7cm of 2_1]              (3_1) {$x_3$};
      \node[shape = circle, draw = black, below=0.7cm of 3_0]              (4_0) {$x_4$};
      \node[shape = circle, draw = black, below=0.7cm of 3_1]              (4_1) {$x_4$};

      \node[shape = rectangle, draw = black, below=.6cm of 4_0]  (sink_F) {$\bot$};
      \node[shape = rectangle, draw = black, below=.6cm of 4_1]  (sink_T) {$\top$};

      \draw[->]
      (0) edge (1_1)
      (1_0) edge[bend left] (2_0)
      (1_1) edge (2_1)
      (2_0) edge[bend left] (3_0)
      (2_1) edge (3_1)
      (3_0) edge[bend left] (4_0)
      (3_1) edge (4_1)
      (4_0) edge[bend left] (sink_F)
      (4_1) edge (sink_T)
      ;

      \draw[->, dashed]
      (0) edge (1_0)
      (1_0) edge[bend right]  (2_0)
      (1_1) edge  (2_0)
      (2_0) edge[bend right]  (3_0)
      (2_1) edge  (3_0)
      (3_0) edge[bend right]  (4_0)
      (3_1) edge  (4_0)
      (4_0) edge[bend right]  (sink_F)
      (4_1) edge  (sink_F)
      ;
    \end{tikzpicture}}\hspace{1cm}
  }
  \qquad
  \subfloat[Reduced OBDD]{
    \hspace{1cm}\scalebox{0.75}{\begin{tikzpicture}
      \node[shape = circle, draw = black]                                  (0)   {$x_0$};
      \node[shape = circle, draw = black, below right=0.7cm and .5cm of 0] (1_1) {$x_1$};
      \node[shape = circle, draw = black, below=0.7cm of 1_1]              (2_1) {$x_2$};
      \node[shape = circle, draw = black, below=0.7cm of 2_1]              (3_1) {$x_3$};
      \node[shape = circle, draw = black, below=0.7cm of 3_1]              (4_1) {$x_4$};

      \node[shape = rectangle, draw = black, below left=.7cm and 1.4cm of 4_1]  (sink_F) {$\bot$};
      \node[shape = rectangle, draw = black, below=.6cm of 4_1]  (sink_T) {$\top$};

      \draw[->]
      (0) edge (1_1)
      (1_1) edge (2_1)
      (2_1) edge (3_1)
      (3_1) edge (4_1)
      (4_1) edge (sink_T)
      ;

      \draw[->, dashed]
      (0) edge[bend right=10] (sink_F)
      (1_1) edge[bend right=5] (sink_F)
      (2_1) edge (sink_F)
      (3_1) edge (sink_F)
      (4_1) edge (sink_F)
      ;
    \end{tikzpicture}}\hspace{1cm}
  }
  
  \caption{Example of reduction increasing the $1$ and $2$-level maximum cut.}
  \label{fig:red1_chain_case}
\end{figure}

\paragraph{Over-approximating the $1$-level Cut.}

Starting from the bottom, when processing a level $k \in \LcalBDD(V)$ we may
over-approximate the $1$-level cut $C_{1:f}^B$ for $B \subseteq \{ \bot, \top
\}$ at $j = k$ by summing the following four disjoint contributions.
\begin{enumerate}
\item \label{reduce:1-level:pq content}
  After having obtained all outgoing arcs for unreduced nodes for level $k$, the
  priority queue only contains outgoing arcs from a level $\ell < k$ to a level
  $\ell' > k$. All of these arcs (may) contribute to the cut.

\item \label{reduce:1-level:unread terminals}
  After having obtained all outgoing arcs for level $k$, all yet unread arcs to
  terminals $b \in B$ are from some level $\ell < k$ and (may) contribute to the
  cut.

\item  \label{reduce:1-level:rule1}
  BDD nodes $v$ removed by the first reduction rule in favor of its reduced
  child $v'$ and $w_B(\arc{\_}{}{v'}) = 1$ (may) contribute up to
  $\abs{\inset(v')}$ arcs to the cut.

\item  \label{reduce:1-level:rule2}
  BDD nodes $v'$ that are output on level $k$ after merging duplicates
  (definitely) contribute with $w_B(v'.\mathit{low}) + w_B(v'.\mathit{high})$
  arcs to the cut.
\end{enumerate}
\ref{reduce:1-level:pq content} and~\ref{reduce:1-level:unread terminals} can be
obtained with some bookkeeping on the priority queue and the contents of the
file containing arcs to terminals. \ref{reduce:1-level:rule2} can be resolved
when reduced nodes are pushed to the output. Yet, \ref{reduce:1-level:rule1}
cannot just use the immediate indegree of the removed node $v$ since, as in
Fig.~\ref{fig:red1_chain_case}, it may be part of a longer chain of redundant
nodes. Here, the actual contribution to the cut at level $j = k$ is the indegree
to the entire chain ending in $v$. Due to the single bottom-up sweep style of
the Reduce algorithm, the best we can do is to assume the worst and always count
reduced arcs $\arc{s'}{}{t'}$ where a node $v$ has been removed between $s'$ and
$t'$ as part of the maximum cut.

\paragraph{Over-approximating the $2$-level Cut.}

The above over-approximation of the $1$-level cut can be extended to recreate
the greedy algorithm from the proof of Prop.~\ref{prop:max 2-level complexity}.
Notice, the $1$-level $(S,T)$ cut mentioned before places all nodes of level $j$
in $S$, whereas these nodes are free to be moved to $T$ in the $2$-level cut for
$j-1$. Specifically, Part~\ref{reduce:1-level:rule2} should be changed such that
$v'$ contributes with
\begin{equation*}
  \max(w_B(v'.\mathit{low}) + w_B(v'.\mathit{high}), \abs{\inset(v')})
  \enspace .
\end{equation*}

This requires knowing $\abs{\inset(v')}$. The Reduce algorithm in
\cite{Soelvsten2022:TACAS} reads from a file containing the parents of an
unreduced node $v$, so information about the reduced result $v'$ can be
forwarded to its unreduced parents. Hence, one can accumulate the number of
parents, $\abs{\inset(v)}$. If $\abs{\inset(v')}$ is not affected by the first
reduction rule then this is an upper bound of $\abs{\inset(v')}$. Otherwise, it
still is sound in combination with the above over-counting to solve the
\ref{reduce:1-level:rule1}$^{\text{rd}}$ type of contribution.

\section{Experimental Evaluation} \label{sec:experiments}

We have extended Adiar to incorporate the ideas presented in
Section~\ref{sec:cut} to address the issues highlighted in
Section~\ref{sec:introduction}. Each algorithm has been extended to compute
sound upper bounds for the next phase. Based on these, each algorithm chooses
during initialisation between running with \TPIE's internal or external memory
data structures. This choice is encapsulated within C++ templates, which avoids
introducing any costly indirection when using the auxiliary data structures
since in both cases their type is already known to the compiler.

Section~\ref{sec:cut:adiar} motivates the following three levels of
granularity:
\begin{itemize}
\item \textbf{\#nodes:} Thm.~\ref{prop:max cut bound} is used based on knowing
  the number of internal nodes in the input and deriving the trivial worst-case
  size of the output.

\item \textbf{$1$-level:} Extends \#nodes with Thm.~\ref{prop:apply 2-level
    cut}. The $i$-level cuts are given by computing the $1$-level cut with the
  proof of Prop.~\ref{prop:max 1-level complexity} as described in
  Section~\ref{sec:cut:adiar:reduce} and then applying Lem.~\ref{prop:2-level
    from 1-level cut:2} to obtain a bound on the 2-level cut.

\item \textbf{$2$-level:} Extends the $1$-level variant by computing $2$-level
  cuts directly with the algorithm based on the proof of Prop.~\ref{prop:max
    2-level complexity} in Section~\ref{sec:cut:adiar:reduce}.
\end{itemize}
All three variants include the computation of $1$-level cuts -- even the \#nodes
one. This reduces the number of variables in our measurements. We have
separately measured the slowdown introduced by computing $1$-level cuts to be
$1.0\%$.

\subsection{Benchmarks} \label{sec:experiments:benchmarks}

We have evaluated the quality of our modifications on the four benchmarks below
that are publicly available at \cite{Soelvsten:bdd-benchmark}. These were
also used to measure the performance of Adiar~1.0 (BDDs) and 1.1 (ZDDs) in
\cite{Soelvsten2022:TACAS, Soelvsten2023:NFM}. The first benchmark is a circuit
verification problem and the others are combinatorial problems.
\begin{itemize}
\item \textbf{\emph{EPFL} Combinational Benchmark Suite~\cite{Amaru2015}.} The
  task is to check equivalence between an original hardware circuit
  (specification) and an optimised circuit (implementation). We construct BDDs
  for all output gates in both circuits, and check if they are equivalent. We
  focus on the $23$ out of the $46$ optimised circuits that Adiar could verify
  in \cite{Soelvsten2022:TACAS}

  Input gates are encoded as a single variable, $x_i$, with a maximum $2$-level
  cut of size $2$.

\item \textbf{Knight's Tour.} On an $N_r \times N_c$ chessboard, the set of all
  paths of a Knight is created by intersecting the valid transitions for each of
  the $N_r N_c$ time steps. The cut of each such ZDD constraint is ${\sim}8N_r
  N_c$. Then, each Hamiltonian constraint with cut size $4$ is imposed onto this
  set \cite{Soelvsten2023:NFM}.

\item \textbf{$N$-Queens.} On an $N \times N$ chessboard, the constraints on
  placing queens are combined per row, based on a base case for each cell. Each
  row constraint is finally accumulated into the complete solution
  \cite{Kunkle2010}.

  For BDDs, each basic cell constraint has a cut size of ${\sim}3N$, while for
  ZDDs it is only $3$.

\item \textbf{Tic-Tac-Toe.} Initially, a BDD or ZDD with cut size ${\sim}N$ is
  created to represent that $N$ crosses have been set within a $4 \times 4
  \times 4$ cube. Then for each of the $76$ lines, a constraint is added to
  exclude any \emph{non-draw} states \cite{Kunkle2010}.

  Each such line constraint has a cut size of $4$ with BDDs and $6$ with ZDDs.
\end{itemize}

\subsection{Tradeoff between Precision and Running Time} \label{sec:experiments:best version}

%
We have run all benchmarks on a consumer-grade laptop with one 2.6 GHz Intel
i7-4720HQ processor, $8$~GiB of RAM, $230$~GiB of available SSD disk, running
Fedora 36, and compiling code with GCC~12.2.1. For each of these $71$ benchmark
instances, Adiar has been given $128$~MiB or $4$~GiB of internal memory.

All combinatorial benchmarks use a unary operation at the end to count the
number of solutions. Table~\ref{tab:count_precision} shows the average ratio
between the predicted and actual maximum size of this operation's priority
queue. As instances grow larger, the quality of the \#nodes heuristic
deteriorates for BDDs. On the other hand, the $1$ and $2$-level cut heuristics
are at most off by a factor of $2$. Hence, since the priority queue's maximum
size is some $2$-level cut, the algorithms in Section~\ref{sec:cut:adiar:reduce}
are only over-approximating the actual maximum $2$-level cut by a factor of $2$.
The result of this is that $i$-level cuts can safely identify that a BDD with
$5.2 \cdot 10^7$ nodes ($1.1$~GiB) can be processed purely within $128$~MiB of
internal memory available. The precision of $i$-level cuts are worse for ZDDs,
but still allow processing a ZDD with $4.3 \cdot 10^7$ nodes ($978$~MiB) with
$128$~MiB of memory.

\if\arxiv1
\begin{table}[b]
\else
\begin{table}[t]
\fi
  \centering

  \caption{Geometric mean of the ratio between the predicted and the actual
    maximum size of the unary Count operation's priority queue. This average is
    also weighed by the input size to gauge the predictions' quality for larger
    BDDs.}
  \label{tab:count_precision}

  \if\arxiv1
    \begin{tabular}{r || c|c|c}
      \multicolumn{4}{c}{BDD}
      \\ \hline \hline
      & \ \#nodes \ & \ $1$-level \ & \ $2$-level
      \\ \hline
      Unweighted Avg.
      &
        2.1\%         & 69.2\%        & 86.3\%
      \\
      Weighted Avg.
      &
        0.1\%         & 76.5\%        & 77.4\%
      \\ \hline
      \multicolumn{4}{c}{ZDD}
      \\ \hline \hline
      Unweighted Avg.
      &
        15.2\%  & 47.8\%    & 67.0\%
      \\
      Weighted Avg.
      &
        25.0\%  & 50.7\%    & 61.8\%
      \\ \hline
    \end{tabular}
  \else
    \begin{tabular}{r || c|c|c || c|c|c}
      & \multicolumn{3}{c||}{BDD}         & \multicolumn{3}{c}{ZDD}
      \\ \hline \hline
      & \ \#nodes \ & \ $1$-level \ & \ $2$-level \ & \ \#nodes \ & \ $1$-level \ & \  $2$-level \
      \\ \hline
      Unweighted Avg.
      &
        2.1\%         & 69.2\%        & 86.3\%    & 15.2\%  & 47.8\%    & 67.0\%
      \\
      Weighted Avg.
      &
        0.1\%         & 76.5\%        & 77.4\%    & 25.0\%  & 50.7\%    & 61.8\%
    \end{tabular}
  \fi
\end{table}
\if\arxiv1
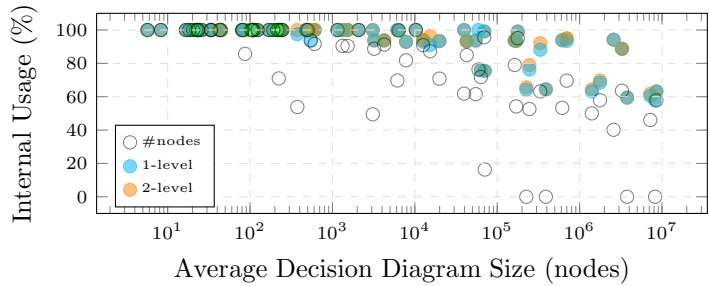
\begin{figure}[t]
\else
\begin{figure}[t]
\fi
  \centering

  \begin{tikzpicture}
    \begin{axis}[%
      width=0.8\linewidth, height=0.35\linewidth,
      every tick label/.append style={font=\scriptsize},
      xlabel={Average Decision Diagram Size (nodes)},
      xmode=log,
      xmajorgrids=true,
      ylabel={Internal Usage (\%)},
      ymin=-10,
      ymax=110,
      ytick={0,20,40,60,80,100},
      yminorgrids=false,
      ymajorgrids=true,
      grid style={dashed,black!12},
      legend pos=south west,
      legend cell align=left,
      ]

      \begin{scope}[blend mode=soft light]
        \addplot+ [forget plot, style=dots_0nodes]
        table {./data/ir_picotrav_0nodes.tex};
        \addplot+ [forget plot, style=dots_0nodes]
        table {./data/ir_knights_tour_open_0nodes.tex};
        \addplot+ [forget plot, style=dots_0nodes]
        table {./data/ir_knights_tour_closed_0nodes.tex};
        \addplot+ [forget plot, style=dots_0nodes]
        table {./data/ir_queens_bdd_0nodes.tex};
        \addplot+ [forget plot, style=dots_0nodes]
        table {./data/ir_queens_zdd_0nodes.tex};
        \addplot+ [forget plot, style=dots_0nodes]
        table {./data/ir_tic_tac_toe_bdd_0nodes.tex};
        \addplot+ [forget plot, style=dots_0nodes]
        table {./data/ir_tic_tac_toe_zdd_0nodes.tex};

        \addplot+ [forget plot, style=dots_1level]
        table {./data/ir_picotrav_1level.tex};
        \addplot+ [forget plot, style=dots_1level]
        table {./data/ir_knights_tour_open_1level.tex};
        \addplot+ [forget plot, style=dots_1level]
        table {./data/ir_knights_tour_closed_1level.tex};
        \addplot+ [forget plot, style=dots_1level]
        table {./data/ir_queens_bdd_1level.tex};
        \addplot+ [forget plot, style=dots_1level]
        table {./data/ir_queens_zdd_1level.tex};
        \addplot+ [forget plot, style=dots_1level]
        table {./data/ir_tic_tac_toe_bdd_1level.tex};
        \addplot+ [forget plot, style=dots_1level]
        table {./data/ir_tic_tac_toe_zdd_1level.tex};

        \addplot+ [forget plot, style=dots_2level]
        table {./data/ir_picotrav_2level.tex};
        \addplot+ [forget plot, style=dots_2level]
        table {./data/ir_knights_tour_open_2level.tex};
        \addplot+ [forget plot, style=dots_2level]
        table {./data/ir_knights_tour_closed_2level.tex};
        \addplot+ [forget plot, style=dots_2level]
        table {./data/ir_queens_bdd_2level.tex};
        \addplot+ [forget plot, style=dots_2level]
        table {./data/ir_queens_zdd_2level.tex};
        \addplot+ [forget plot, style=dots_2level]
        table {./data/ir_queens_bdd_2level.tex};
        \addplot+ [forget plot, style=dots_2level]
        table {./data/ir_tic_tac_toe_bdd_2level.tex};
        \addplot+ [forget plot, style=dots_2level]
        table {./data/ir_tic_tac_toe_zdd_2level.tex};
      \end{scope}

      \begin{scope}[blend mode=normal]
        \addlegendimage{style=dots_0nodes}
        \addlegendentry{\tiny \#nodes}

        \addlegendimage{style=dots_1level}
        \addlegendentry{\tiny $1$-level}

        \addlegendimage{style=dots_2level}
        \addlegendentry{\tiny $2$-level}
      \end{scope}
    \end{axis}
  \end{tikzpicture}

  \caption{\emph{Internal} vs.\ \emph{external} memory usage for product constructions (128~MiB).}
  \label{fig:internal_ratio__128MiB}
\end{figure}

This difference in precision affects the product construction algorithms, e.g.\
the Apply operation. Fig.~\ref{fig:internal_ratio__128MiB} shows the amount of
product constructions that each heuristic enables to run with internal memory
data structures. Even when the average BDD was $10^7$ nodes ($229$~MiB) or
larger, with $i$-level cuts at least $59.5\%$ of all algorithms were run purely
in $128$~MiB of memory, whereas with \#nodes sometimes none of them were. Yet,
while there is a major difference between \#nodes and $1$-level cuts, going
further to $2$-level cuts only has a minor effect.

\if\arxiv1
\begin{figure}[t]
\else
\begin{figure}[t]
\fi
  \centering

  \subfloat[$128$~MiB internal memory]{
    \begin{tikzpicture}
      \begin{axis}[%
        width=0.45\linewidth, height=0.35\linewidth,
        every tick label/.append style={font=\scriptsize},
        xlabel={\scriptsize Maximum Diagram Size},
        xmin=1,
        xtick={1,100,10000,1000000,100000000},
        xmax=1000000000,
        xmode=log,
        ylabel={\scriptsize Time Difference (\%)},
        ymin=-35,
        ymax=15,
        ytick={-30,-20,-10,0,10},
        grid style={dashed,black!12},
        ]

        \addplot[domain=1:1000000000, samples=8, color=black]
        {0};

        \begin{scope}[blend mode=soft light]
          \addplot[domain=1:1000000000, samples=8, style=plot_1level]
          {-4.85};

          \addplot[domain=1:1000000000, samples=8, style=plot_2level]
          {-2.56};

          \addplot+ [forget plot, style=dots_1level]
          table {./data/time_ratio_picotrav_1level__128MiB.tex};
          \addplot+ [forget plot, style=dots_2level]
          table {./data/time_ratio_picotrav_2level__128MiB.tex};

          \addplot+ [forget plot, style=dots_1level]
          table {./data/time_ratio_knights_tour_open_1level__128MiB.tex};
          \addplot+ [forget plot, style=dots_2level]
          table {./data/time_ratio_knights_tour_open_2level__128MiB.tex};

          \addplot+ [forget plot, style=dots_1level]
          table {./data/time_ratio_knights_tour_closed_1level__128MiB.tex};
          \addplot+ [forget plot, style=dots_2level]
          table {./data/time_ratio_knights_tour_closed_2level__128MiB.tex};

          \addplot+ [forget plot, style=dots_1level]
          table {./data/time_ratio_queens_bdd_1level__128MiB.tex};
          \addplot+ [forget plot, style=dots_2level]
          table {./data/time_ratio_queens_bdd_2level__128MiB.tex};

          \addplot+ [forget plot, style=dots_1level]
          table {./data/time_ratio_queens_zdd_1level__128MiB.tex};
          \addplot+ [forget plot, style=dots_2level]
          table {./data/time_ratio_queens_zdd_2level__128MiB.tex};

          \addplot+ [forget plot, style=dots_1level]
          table {./data/time_ratio_tic_tac_toe_bdd_1level__128MiB.tex};
          \addplot+ [forget plot, style=dots_2level]
          table {./data/time_ratio_tic_tac_toe_bdd_2level__128MiB.tex};

          \addplot+ [forget plot, style=dots_1level]
          table {./data/time_ratio_tic_tac_toe_zdd_1level__128MiB.tex};
          \addplot+ [forget plot, style=dots_2level]
          table {./data/time_ratio_tic_tac_toe_zdd_2level__128MiB.tex};
        \end{scope}
      \end{axis}
    \end{tikzpicture}
  }
  \quad
  \subfloat[$4$~GiB internal memory]{
    \begin{tikzpicture}
      \begin{axis}[%
        width=0.45\linewidth, height=0.35\linewidth,
        every tick label/.append style={font=\scriptsize},
        xlabel={\scriptsize Maximum Diagram Size},
        xmin=1,
        xmax=1000000000,
        xmode=log,
        xtick={1,100,10000,1000000,100000000},
        ylabel={\scriptsize Time Difference (\%)},
        ymin=-35,
        ymax=15,
        ytick={-30,-20,-10,0,10},
        grid style={dashed,black!12},
        ]

        \addplot[domain=1:1000000000, samples=8, color=black]
        {0};

        \begin{scope}[blend mode=soft light]
          \addplot[domain=1:1000000000, samples=8, style=plot_1level]
          {-4.85};

          \addplot[domain=1:1000000000, samples=8, style=plot_2level]
          {-2.56};

          \addplot+ [forget plot, style=dots_1level]
          table {./data/time_ratio_picotrav_1level__4GiB.tex};
          \addplot+ [forget plot, style=dots_2level]
          table {./data/time_ratio_picotrav_2level__4GiB.tex};

          \addplot+ [forget plot, style=dots_1level]
          table {./data/time_ratio_knights_tour_open_1level__4GiB.tex};
          \addplot+ [forget plot, style=dots_2level]
          table {./data/time_ratio_knights_tour_open_2level__4GiB.tex};

          \addplot+ [forget plot, style=dots_1level]
          table {./data/time_ratio_knights_tour_closed_1level__4GiB.tex};
          \addplot+ [forget plot, style=dots_2level]
          table {./data/time_ratio_knights_tour_closed_2level__4GiB.tex};

          \addplot+ [forget plot, style=dots_1level]
          table {./data/time_ratio_queens_bdd_1level__4GiB.tex};
          \addplot+ [forget plot, style=dots_2level]
          table {./data/time_ratio_queens_bdd_2level__4GiB.tex};

          \addplot+ [forget plot, style=dots_1level]
          table {./data/time_ratio_queens_zdd_1level__4GiB.tex};
          \addplot+ [forget plot, style=dots_2level]
          table {./data/time_ratio_queens_zdd_2level__4GiB.tex};

          \addplot+ [forget plot, style=dots_1level]
          table {./data/time_ratio_tic_tac_toe_bdd_1level__4GiB.tex};
          \addplot+ [forget plot, style=dots_2level]
          table {./data/time_ratio_tic_tac_toe_bdd_2level__4GiB.tex};

          \addplot+ [forget plot, style=dots_1level]
          table {./data/time_ratio_tic_tac_toe_zdd_1level__4GiB.tex};
          \addplot+ [forget plot, style=dots_2level]
          table {./data/time_ratio_tic_tac_toe_zdd_2level__4GiB.tex};
        \end{scope}
      \end{axis}
    \end{tikzpicture}
  }

  \begin{tikzpicture}
    \begin{customlegend}[
      legend columns=-1,
      legend style={draw=none,column sep=1ex},
      legend entries={\footnotesize $1$-level,\footnotesize $2$-level}
      ]
      \addlegendimage{style=dots_1level}
      \addlegendimage{style=dots_2level}
    \end{customlegend}
  \end{tikzpicture}

  \caption{\Adiar\ with $i$-level cuts compared to \#nodes (lower is better).
    Horizontal lines show the average difference in performance.}
  \label{fig:scatter_vs_nodes}
\end{figure}
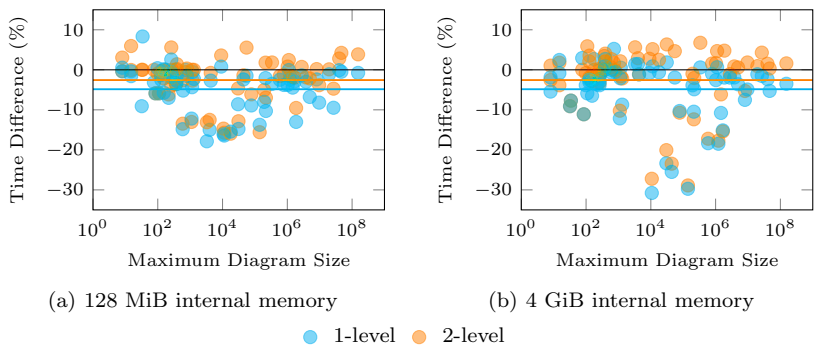

How often internal memory could be used is also reflected in Adiar's
performance. Fig.~\ref{fig:scatter_vs_nodes} shows the difference in the running
time between using $i$-level cuts and only using \#nodes. All benchmarks runs
were interleaved and repeated at least $8$ times. The minimum measured running
time is reported as it minimises any noise due to hardware and the operating
system \cite{Chen2016}. Since the \#nodes version also includes the computation
for the $1$-level cuts but does not use them, any performance decrease in
Fig.~\ref{fig:scatter_vs_nodes} for $1$-level cuts is due to noise.

Using the geometric mean, $1$-level cuts provide a $4.9\%$ improvement over
\#nodes. Considering the $1.0\%$ overhead for computing the $1$-level cuts, this
is a net improvement of $3.9\%$. More importantly, in a considerable amount of
benchmarks, using $i$-level cuts improves the performance by more than $10\%$,
sometimes by $30\%$. These are the benchmark instances where only $i$-level cuts
can guarantee that all auxiliary data structures can fit within internal memory,
yet the instances are still so small that there is a major overhead in
initialising \TPIE's external memory data structures.

The improvement in precision obtained by using $2$-level cuts does not pay off
in comparison to using $1$-level cuts. On average, using $2$-level cuts only
improves the performance of using \#nodes with $2.6\%$. That is, the additional
cost of computing $2$-level cuts outweighs the benefits of its added precision.

Adiar with $i$-level cuts did not slow down as internal memory was increased
from $128$~MiB to $4$~GiB. That is, the precision of both these bounds -- unlike
\#nodes -- ensures that external memory data structures are only used when their
initialisation cost is negligible.
Hence, Adiar with $1$-level cuts covers all our needs at the minimal
computational cost and so is included in Adiar~1.2.

\subsection{Impact of Introducing Cuts on Adiar's Running Time} \label{sec:experiments:impact}

In \cite{Soelvsten2022:TACAS,Soelvsten2023:NFM} we measured the performance of
Adiar~1.0 and 1.1 against the conventional BDD packages
CUDD~3.0~\cite{Somenzi2015} and Sylvan~1.5~\cite{Dijk2016}. In those
experiments~\cite{Soelvsten2022:TACAS:Artifact,Soelvsten2023:NFM:Artifact},
Sylvan was not using multi-threading and all experiments were run on machines
with $384$~GiB of RAM of which $300$~GiB was given to the BDD package. To gauge
the impact of using cuts, we now compare our previous measurements without cuts
to new ones with cuts on the exact same hardware and settings. The results of
our new measurements are available at \cite{Soelvsten2023:ATVA:Artifact}.

With $300$~GiB internal memory available, all three modified versions of Adiar
essentially behave the same. Hence, in Fig.~\ref{fig:motivation}
(cf.~Section~\ref{sec:introduction}) we show the best performance for all three
versions on top of the data reported in \cite{Soelvsten2022:TACAS}. Even on the
largest benchmarks we see a performance increase by exploiting cuts. Most
important is the increase in performance for the moderate-size instances where
the initialisation of \TPIE's external memory data structures are costly, e.g.\
$N$-Queens with $N < 11$ and Tic-Tac-Toe with $N < 19$. Based on the data in
\cite{Soelvsten2022:TACAS,Soelvsten2023:NFM} these instances of the
combinatorial benchmarks are the ones where the largest constructed BDD or ZDD
is smaller than $4.9 \cdot 10^6$ nodes ($113$~MiB).

Using the geometric mean, the time spent solving both the combinatorial and
verification benchmarks decreased with Adiar~1.2 on average by $86.1\%$ (with
median $89.7\%$) in comparison to previous versions. For some instances this
difference is even $99.9\%$. In fact, Adiar~1.2 is in some specific instances of
the Tic-Tac-Toe benchmarks faster than CUDD. These are the very instances that
are large enough for CUDD's first -- and comparatively expensive -- garbage
collection to kick in and dominate its running time.

Verifying the EPFL benchmarks involves constructing a few BDDs that are larger
than the $113$~MiB bound mentioned above, but most BDDs are much smaller. As
shown in Table~\ref{tab:picotrav_time}, for the $16$ EPFL circuits\footnote{In
  \cite{Soelvsten2023:ATVA} we incorrectly reported this as being 15 rather than
  16 circuits of this size.} that only generate BDDs smaller than $113$~MiB,
using cuts decreases the computation time on average by $92\%$ (with median
$92\%$). While Adiar~v1.0 still took $56.5$ hours to verify these $16$ circuits,
now with Adiar~1.2 it only takes $4.0$ hours to do the same. These $52.5$ hours
are primarily saved within one of the $16$ circuits. Specifically, using cuts
has decreased the time to verify the \texttt{sin} circuit optimised for depth by
$52.1$ hours. Here, the average BDD size is $2.9$~KiB, the largest BDD
constructed is $25.5$~MiB in size, and up to $42,462$ BDDs are in use
concurrently.

\begin{table}[ht]
  \centering

  \caption{Minimum Running Time to construct the EPFL benchmark
    circuits~\cite{Amaru2015} optimized for depth (d) or size (s) together with
    its respective specification circuit. The variable order $\pi$ was either
    set to be based on a level/depth-first (LD) traversal of the circuit or the
    given input-order (I). Some timings are not provided due to a Memory Out
    (MO), a Time Out (TO), or the measurement has not been made (--). All
    timings for \Adiar~v1.0, CUDD, and Sylvan are from
    \cite{Soelvsten2022:TACAS}, except for \texttt{mem\_ctrl} and \texttt{voter}
    with LD variable ordering.}
  \label{tab:picotrav_time}

  \tiny
  \begin{tabular}{ccc||cc||c|c|c|c}
                       &       &
    \\
    \multicolumn{3}{c||}{Circuit}    & \multicolumn{2}{c||}{BDD Size (MiB)} &  \Adiar~v1.0  &  \Adiar~v1.2  &  CUDD   & Sylvan
    \\
    name               & opt.  & $\pi$ & ~~Avg.~~        & ~~Max~~            & Time (ms)         & Time (ms)         & Time (ms)    & Time (ms)
    \\ \hline \hline
    \texttt{adder}     & d     & LD    & 0.0028      & 0.0182         & 193552        & 8022          & 790     & 170
    \\
                       & s     & LD    & 0.0030      & 0.0088         & 138116        & 5300          & 772     & 91
    \\
    \texttt{arbiter}   & d     & I     & 0.0125      & 63.64          & 472784        & 73638         & 7664    & 24769
    \\
    \texttt{cavlc}     & d+s   & I     & 0.0001      & 0.0022         & 29550         & 1943          & 2       & 8
    \\
    \texttt{ctrl}      & d+s   & I     & 0.0000      & 0.0003         & 5173          & 461           & 0       & 2
    \\
    \texttt{dec}       & d+s   & I     & 0.0001      & 0.0002         & 21305         & 1544          & 0       & 4
    \\
    \texttt{i2c}       & d     & I     & 0.0001      & 0.0060         & 36637         & 2942          & 3       & 9
    \\
                       & s     & I     & 0.0001      & 0.0060         & 36192         & 3290          & 3       & 9
    \\
    \texttt{int2float} & d     & I     & 0.0001      & 0.0035         & 8166          & 783           & 0       & 3
    \\
                       & s     & I     & 0.0001      & 0.0035         & 15205         & 1224          & 0       & 4
    \\
    \texttt{mem\_ctrl} & d     & I     & 3.9550      & 16571          & 400464754     & 357042302     & MO      & TO
    \\
                       & s     & I     & 3.9226      & 16571          & 398777513     & 356500951     & MO      & TO
    \\
                       & d     & LD    & 0.0713      & 34.94          & --            & 199999        & 57728   & --
    \\
                       & s     & LD    & 0.1264      & 264.1          & --            & 298615        & 97441   & --
    \\
    \texttt{priority}  & d     & I     & 0.0001      & 0.0029         & 30864         & 1861          & 2       & 10
    \\
                       & s     & I     & 0.0003      & 0.0049         & 33321         & 2035          & 3       & 11
    \\
    \texttt{router}    & d     & I     & 0.0001      & 0.0073         & 7526          & 545           & 0       & 3
    \\
                       & s     & I     & 0.0001      & 0.0029         & 5644          & 569           & 0       & 2
    \\
    \texttt{sin}       & d     & LD    & 0.0021      & 25.50          & 199739354     & 12268821      & 299403  & 226713
    \\
                       & s     & LD    & 0.8787      & 25.43          & 2585946       & 1840623       & 465770  & 394675
    \\
    \texttt{voter}     & d     & I     & 2.190       & 8241           & 25078751      & 16357661      & MO      & 11191333
    \\
                       & s     & I     & 0.4801      & 8241           & 8520173       & 5197230       & 2307858 & 2775903
    \\
                       & d     & LD    & 1.044       & 4348           & --            & 32637944      & 3950294 & --
    \\
                       & s     & LD    & 0.2477      & 249            & --            & 1391096       & 295991  & --
  \end{tabular}
\end{table}

Despite this massive performance improvement with Adiar~1.2 due to our new
technique, there is still a significant gap of 3.7 hours with CUDD and Sylvan on
these 16 circuits. We attribute this to the fact that these benchmarks also
include many computations on really tiny BDDs. Although we keep the auxiliary
data structures in internal memory, the resulting BDDs are still stored on disk,
even when they consist of only a few nodes.

\section{Conclusion} \label{sec:conclusion}

We introduce the idea of a maximum $i$-level cut for DAGs that restricts the cut
to be within a certain window. For $i \in \{ 1,2 \}$ the problem of computing
the maximum $i$-level cut is polynomial-time computable. But, we have been able
to piggyback a slight over-approximation with only a $1\%$ linear overhead onto
Adiar's I/O-efficient bottom-up Reduce operation.

An $i$-level cut captures the shape of Adiar's auxiliary data structures during
the execution of its I/O-efficient time-forward processing algorithms. Hence,
similar to how conventional recursive BDD algorithms have the size of their call
stack linearly dependent on the depth of the input, the maximum $2$-level cuts
provide a sound upper bound on the memory used during Adiar's computation. Using
this, Adiar~1.2 can deduce soundly whether using exclusively internal memory is
possible, increasing its performance in those cases. Doing so decreases
computation time for moderate-size instances up to $99.9\%$ and on average by
$86.1\%$ (with median $89.7\%$).

\subsection{Related and Future Work}

Many approaches tried to achieve large-scale BDD manipulation with distributed
memory algorithms, some based on breadth-first algorithms, e.g.\
\cite{Stornetta1996,Yang1997,Milvang-Jensen1998,Kunkle2010}. Yet, none of these
approaches obtained a satisfactory performance. The speedup obtained by a
multicore implementation~\cite{Dijk2016} relies on parallel depth-first
algorithms using concurrent hash tables, which doesn't scale to external memory.

CAL~\cite{Sanghavi1996} (based on a breadth-first approach
\cite{Ochi1993,Ashar1994}) is to the best of our knowledge the only other BDD
package designed to process large BDDs on a single machine. CAL is I/O
efficient, assuming that a single BDD level fits into main memory; the I/O
efficiency of Adiar does not depend on this assumption. Similar to Adiar, CAL
suffers from bad performance for small instances. To deal with this, CAL
switches to the classical recursive depth-first algorithms when all the given
input BDDs contain fewer than $2^{19}$ nodes ($15$~MiB). As far as we can tell,
CAL's threshold is purely based on experimental results of performance and
without any guarantees of soundness. That is, the output may potentially exceed
main memory despite all inputs being smaller than $2^{19}$ nodes, which would
slow it down significantly due to random-access. For BDDs smaller than CAL's
threshold of $2^{19}$ nodes, Adiar~1.2 with $i$-level cuts could run almost all
of our experiments with auxiliary data structures purely in internal memory.

Yet, as is evident in Fig.~\ref{fig:motivation}, when dealing with decision
diagrams smaller than 44.000 nodes (1~MiB), there is still a considerable gap
between Adiar's performance and conventional depth-first based BDD packages (see
also end of Sec.~\ref{sec:experiments:impact}). Apparently, we have reached a
lower bound on the BDD size for which time-forward processing on external memory
is efficient. Solving this would require an entirely different approach: one
that can efficiently and seamlessly combine BDDs stored in internal memory with
BDDs stored in external memory.

\subsection{Applicability Beyond Decision Diagrams}

Our idea is generalisable to all time-forward processing algorithms: the
contents of the priority queues are at any point in time a $2$-level cut with
respect to the input and/or output DAG. Hence, one can bound the algorithm's
memory usage if one can compute a levelisation function and the $1$-level cuts
of the inputs.

A levelisation function is derivable with the preprocessing step in
\cite{Hellings2012} and the cut sizes can be computed with an I/O-efficient
version of the greedy algorithm presented in this paper. Yet for our approach to
be useful in practice, one has to identify a levelisation function that best
captures the structure of the DAG in relation to the succeeding algorithms and
where both the computation of the levelisation and the $1$-level cut can be
computed with only a negligible overhead -- preferably within the other
algorithms.

\clearpage
\section*{Acknowledgements}

We want to thank Anna Blume Jakobsen for her help implementing the use of
$i$-level cuts in Adiar and Kristoffer Arnsfelt Hansen for his input on the
computational complexity of these cuts. Finally, thanks to the Centre for
Scientific Computing, Aarhus,
(\href{http://phys.au.dk/forskning/cscaa/}{phys.au.dk/forskning/cscaa/}) for
running our benchmarks.

%
%
%
\bibliographystyle{plain}
\bibliography{references}


\end{document}